\documentclass{IEEEtran}
\usepackage{textcomp}
\usepackage{url}
\usepackage{graphicx}
\usepackage{amssymb,amsfonts, amsmath, amsthm}
\usepackage{mathrsfs}
\usepackage{hyperref}
\newtheorem{theorem}{Theorem}
\newtheorem{corollary}{Corollary}
\newtheorem{lemma}{Lemma} 
\usepackage{algorithm,algpseudocode}
\algrenewcommand\algorithmicindent{0.7em}
\usepackage{cite}
\usepackage[labelsep=space, font=small, labelfont=bf]{caption} 
\usepackage{subcaption}
\usepackage{bbold}
\usepackage{bbm}
\usepackage{xcolor}
\newtheorem{assumption}{Assumption}

\DeclareMathOperator{\diag}{diag}
\DeclareMathOperator{\blockdiag}{blockdiag}
\def\BibTeX{{\rm B\kern-.05em{\sc i\kern-.025em b}\kern-.08em
    T\kern-.1667em\lower.7ex\hbox{E}\kern-.125emX}}
\begin{document}
\title{Communication-Efficient Distributed Kalman Filtering using ADMM}
\author{Muhammad Iqbal, Kundan Kumar, and Simo S{\"a}rkk{\"a}, \IEEEmembership{Senior Member, IEEE}
\thanks{M. Iqbal, K. Kumar, and S. S{\"a}rkk{\"a} are with the Department of Electrical Engineering and Automation, Aalto University, Finland (email: muhammad.iqbal@aalto.fi; 
kundan.kumar@aalto.fi; 
simo.sarkka@aalto.fi).}
}
\maketitle
\begin{abstract}
This paper addresses the problem of optimal linear filtering in a network of local estimators, commonly referred to as distributed Kalman filtering (DKF). The DKF problem is formulated within a distributed optimization framework, where coupling constraints require the exchange of local state and covariance updates between neighboring nodes to achieve consensus. To address these constraints, the problem is transformed into an unconstrained optimization form using the augmented Lagrangian method. The distributed alternating direction method of multipliers (ADMM) is then applied to derive update steps that achieve the desired performance while exchanging only the primal variables. Notably, the proposed method enhances communication efficiency by eliminating the need for dual variable exchange.  We show that the design parameters depend on the maximum eigenvalue of the network’s Laplacian matrix, yielding a significantly tighter bound compared to existing results. A rigorous convergence analysis is provided, proving that the state estimates converge to the true state and that the covariance matrices across all local estimators converge to a globally optimal solution. Numerical results are presented to validate the efficacy of the proposed approach.
\end{abstract}

\begin{IEEEkeywords}
Distributed filtering, ADMM, Kalman filtering.
\end{IEEEkeywords}

\section{Introduction}\label{sec_intro}

The Kalman filter, introduced by Kalman in 1960 \cite{Kalman1960new}, remains one of the most significant contributions to engineering, with widespread applications in science, medicine \cite{mohite2021optimization}, economics \cite{athans1974importance}, and various engineering domains \cite{sarkka2023bayesian}. Traditionally, a single estimator is employed to infer the state of a dynamical system. However, large-scale systems—such as bridges \cite{morgese2024distributed}, smart grids \cite{kar2014distributed}, forest fire monitoring \cite{olfati2012coupled}, phased-array systems \cite{rashid2023high}, and complex cyber-physical systems \cite{an2019distributed}—demand a network of local estimators for accurate state estimation. In such contexts, distributed algorithms provide scalable, modular, and robust solutions. 
The complexity of state estimation is further heightened by the presence of process and measurement noise, particularly when heterogeneous sensor models are involved \cite{khan2008distributing}.

 
Distributed Kalman filtering (DKF) is a solution for estimating the state of a linear dynamical system, along with its associated uncertainty (covariance matrix), observed by a sensor network in the presence of process and measurement noise \cite{olfati2007distributed, carli2008distributed, marelli2018distributed, das2016consensus+, talebi2019distributed, cattivelli2010diffusion, battilotti2024optimal, yan2022distributed}. In DKF, the prediction step is performed using only local information, similar to a single estimator, while the update step incorporates both local information and information exchanged from neighboring nodes to achieve consensus \cite{olfati2007distributed, khan2008distributing, cattivelli2010diffusion}. Numerous studies have developed DKF algorithms for state estimation under uncertainty \cite{olfati2007distributed, cattivelli2010diffusion, ryu2023consensus, luo2024improved, yan2024distributed}.
It is well-known that for linear Gaussian models, one can present optimal filtering problem as a maximum a posterior problem (MAP), using Bayesian framework, to design Kalman filter \cite{sarkka2023bayesian}. Similarly, for linear Gaussian models, distributed state estimation using a sensors network, can be presented as distributed optimization problem using MAP \cite{ryu2019distributed, ryu2023consensus, li2020revisiting} in a distributed Bayesian framework \cite{paritosh2022distributed}.
In \cite{ryu2023consensus, ryu2019distributed}, a dual-ascent method is employed to estimate the state of a dynamical system  and its covariance matrix in a fully distributed manner. An ADMM-based approach is used to derive the update rule for DKF \cite{wang2018distributed, wang2019distributed}. In \cite{wang2018distributed}, the number of dual variables equals the number of edges, leading to increased computational complexity. Additionally, the consensus process slows down due to the selection of a smaller step size in updating the information rate matrix, where the step size is inversely proportional to the degree of a node. Furthermore, \cite{wang2019distributed} assumes a complete graph and requires the exchange of measurement matrices and measurement noise covariances, which is a strong assumption. Additionally, neither \cite{wang2018distributed} nor \cite{wang2019distributed} provide a convergence analysis. In \cite{ryu2019distributed,ryu2023consensus}, both primal and dual variables are exchanged to achieve consensus. The design parameter in \cite{ryu2019distributed} is upper bounded by the inverse of the square of the maximum eigenvalue of the Laplacian matrix multiplied by the norm of the information rate matrix for each edge, leading to a small design parameter that slows the consensus process. Although \cite{ryu2019distributed} improves this upper bound, it is still related to the square of the maximum eigenvalue of the Laplacian matrix.


The contributions of this paper are as follows:
\begin{itemize}
    \item We introduce a variant of distributed ADMM that does not require the exchange of dual variables in the update step, thus reducing the communication burden.
        \item We derive tight upper bounds on the design parameters used in the update step for the posterior state estimate, enabling local estimators to significantly reduce the consensus error within a few sub-iterations.
   \item In the proposed method, the update of information rate matrix does not require sub-iterations.
    \item  We show that the local estimators at each node provide unbiased estimates as time approaches infinity.
\end{itemize}
In addition, the distributed ADMM algorithm in this paper differs from \cite{wang2018distributed, wang2019distributed} because the augmented Lagrangian is designed in a unique way that produces an update step, which reduces the number of dual variables to the number of nodes.
In contrast, \cite{wang2018distributed} introduces dual variables equal to the number of edges, which increases computational complexity. Furthermore, the consensus process slows down due to the algorithm's structure, particularly in dense networks. In addition, we provide convergence analysis which is not given in \cite{wang2018distributed, wang2019distributed}.
 
 \section{Problem Formulation}
 Consider a network of $N \geq 2$ sensor nodes measuring the output of a discrete-time dynamical system:
\begin{equation}
\begin{split}
        \label{equ:system}
    x_{t+1} =& Fx_t + w_t, \\
    y_t =& Hx_t+ v_t,
\end{split}
\end{equation}
where $x_t \in \mathbb{R}^n$ is the state vector at time $t \in \{0, 1, \ldots, T \} $, and $y_t = [y_{1,t}^{\top}, \cdots, y_{N,t}^{\top}]^{\top} \in \mathbb{R}^m$ is the aggregated measurement vector of all sensors. Each sensor $i \in \{1,2, \ldots, N\}$ provides measurements $y_{i,t} \in \mathbb{R}^{m_i}$, with $F$ as the state-transition matrix and $H= [H^{\top}_1, \cdots, H^{\top}_N]^{\top}$ as the measurement matrix. The process noise $w_t$ and measurement noise $v_t = [v_{1,t}^\top, \cdots, v_{N,t}^\top]^\top$ are zero-mean, and white Gaussian, satisfying the following properties: 
\begin{equation*}
\begin{split}
        \mathbb{E}\{w_tw^{\top}_l\} &= Q \delta_{tl}, \; \; \mathbb{E}\{v_{t}v^{\top}_{l}\} = \bar{R} \delta_{tl}, \\
        \mathbb{E}\{w_tv^{\top}_{i,l}\} &= 0, \; \; i = 1, \ldots, N,
\end{split}
\end{equation*}
where $\mathbb{E}\{\cdot\}$ is the expectation operator, $Q$ and $\bar{R} = \mathrm{diag}\{R_1, \cdots, R_N\}$ are positive definite matrices, and $\delta_{tl}$ is the Kronecker delta. The initial state $x_o \sim \mathcal N(\mathbb{E}\{x_o\}, P_o)$ is uncorrelated with $w_t$ and $v_t$. 

The sensor network is represented as an undirected graph $ {\mathcal G=\left(\mathcal V,\mathcal E \right) }$, where $ \mathcal V=\left\{1, 2,\cdots, N\right\}$ is the set of nodes, and $ \mathcal E \subseteq  \mathcal V \times  \mathcal V$ defines the edges. The graph’s adjacency matrix $A$ satisfies $a_{ij}=1$ if node $i$ is receiving information from node $j$, and $a_{ij}=0$ otherwise.  For undirected graphs, $a_{ij}=a_{ji}$. The set of neighbors of node $i$ is denoted as $\mathcal N_i = \left\lbrace j \; | \; a_{ij} = 1, \; j \in \mathcal V \right\rbrace $.

The objective is to design a distributed algorithm that enables each node to estimate the state $x_t$ of a dynamical system using its local measurements and information from its neighbors. Specifically, we propose a distributed ADMM-based approach for the correction step in DKF. The algorithm ensures that the local estimators at each node satisfy:
\begin{equation*}
    \lim_{t \to \infty} \mathbb{E}\{x_t - \xi_{i,t}\}=0,  \; \;   \lim_{t \to \infty} \lVert P^* - P_{i, t\mid t-1} \rVert_F = 0,
\end{equation*}
where $\xi_{i,t}$ is the posterior estimate of the state at node $i$, $\lVert \cdot \rVert_F$ represents the Frobenius norm, $P_{i, t\mid t-1}$ is the local prior covariance matrix at node $i$, and $P^*$ is the unique positive definite solution to the discrete-time algebraic Riccati equation:
\begin{equation*}
    P^*= FP^*F^{\top}-FP^*H^{\top}(HP^*H^{\top} + \bar{R})^{-1}HP^*F^{\top}+Q,
\end{equation*} 
in the case of centralized Kalman filter \cite{ryu2023consensus}.
\section{Distributed Kalman Filtering}
Observing the state of a complex dynamical system using a single sensor is often impractical. The centralized filtering methods can be used to estimate the state of a complex dynamical system but increases the communication burden and fragility. Instead, multiple sensors can be deployed, without having an anchor node, to observe different parts of the state vector of the physical process. To address this, we relax the assumption that the entire system is observable from a single sensor. In this framework, multiple local estimators collaborate, sharing local information to estimate the entire state vector, while each observes only a partial state. To ensure the sensor network can collectively estimate the state of the system, we impose the following assumptions:
\begin{assumption}
    The pair $(F,H)$ is observable.
\end{assumption}
\begin{assumption}
    The network $\mathcal{G}$ is static and connected.
\end{assumption}
Assumption 1 indicates that $(F, H_i)$ is not necessarily observable for any individual sensor.
These assumptions are standard and can be found in \cite{ryu2023consensus} and the references therein. 

In a distributed estimation setting, the $i$th estimator has access to its own measurement \( y_{i,t} \) at time \( t \), the measurement matrix \( H_i \), and the measurement covariance matrix \( R_i \). The matrices \( F \) and \( Q \) are assumed to be known to all estimators.

In DKF, the prediction step is identical to that of centralized Kalman filtering. The local prediction step for each estimator is given by:
\begin{equation}
\begin{split}
    \hat{x}_{i, t|t-1} &= F \hat{x}_{i, t-1 \mid t-1}, \\
    P_{i,t|t-1} &= F P_{i,t-1 \mid t-1}F^{\top} + Q,
\end{split}
\end{equation}
where \( \hat{x}_{i, t|t-1} \) denotes the predicted mean at time \( t \) based on the posterior mean \( \hat{x}_{i, t-1 \mid t-1} \) at time \( t-1 \). Similarly, \( P_{i,t|t-1} \) is the predicted error covariance matrix of agent \( i \), computed using the posterior error covariance \( P_{i,t-1 \mid t-1} \) at time \( t-1 \).


In the correction step, each estimator updates its state estimate using local information and information exchanged with its neighbors. For a linear Gaussian system, this step corresponds to solving the maximum a posteriori (MAP) estimation problem in a distributed manner \cite{ryu2023consensus}. To develop the solution, we first formulate the MAP problem for the $i$th node by ignoring coupling constraints. This serves as a basis for understanding the local estimation problem, which is later extended to incorporate the coupling constraints for the complete distributed formulation.
The MAP problem for the $i$th node, without coupling constraints, is formulated as:
\begin{equation}
\label{cost_fn_initial}
\hat{x}_{i,t \mid t} = \arg \max_{x_t} p(y_{i, t} \mid x_t) p(x_t \mid y_{i, 1:t-1}),
\end{equation}
where \( p(y_{i, t} \mid x_t) = \mathcal{N}(y_{i, t} \mid H_i x_t, R_i) \) is the likelihood function of the $i$th estimator, and \( p(x_t \mid y_{i, 1:t-1}) = \mathcal{N}(x_t \mid \hat{x}_{i, t \mid t-1}, P_{i, t \mid t-1}) \) is its prior probability density function. Using the monotonicity of the logarithmic function, \eqref{cost_fn_initial} is equivalent to:
\begin{equation}
\label{posterior_pdf_MAP1}
\hat{x}_{i,t \mid t} = \arg \max_{x_t} \ln \left( p(y_{i, t} \mid x_t) p(x_t \mid y_{i, 1:t-1}) \right).
\end{equation}
The terms in \eqref{posterior_pdf_MAP1} are expanded as follows:
\begin{align*}
&\ln p(y_{i, t} \mid x_t) = -\frac{1}{2} (y_{i, t} - H_i x_t)^\top R_i^{-1} (y_{i, t} - H_i x_t) \\
&\quad - \frac{1}{2} \ln((2\pi)^{m_i} \det(R_i)), \\
&\ln p(x_t \mid y_{i, 1:t-1}) = -\frac{1}{2} (\hat{x}_{i, t \mid t-1} - x_t)^\top P_{i, t \mid t-1}^{-1}  (\hat{x}_{i, t \mid t-1} - x_t) \\& \quad 
- \frac{1}{2} \ln((2\pi)^n \det(P_{i, t \mid t-1})).
   \end{align*}
Substituting $\ln p(y_{i, t} \mid x_t)$ and $\ln p(x_t \mid y_{i, 1:t-1})$ in \eqref{posterior_pdf_MAP1}, the optimization problem in \eqref{posterior_pdf_MAP1} takes the following form:
\begin{equation}
\label{optimal_estimate_single}
\hat{x}_{i, t \mid t} = \arg \min_{\xi_{i,t}} f_{i,t}(\xi_{i,t}),
\end{equation}
where \( \xi_{i,t} \in \mathbb{R}^n \) and
\[
f_{i,t}(\xi_{i,t}) = \frac{1}{2} (\hat{x}_{i, t \mid t-1} - \xi_{i,t})^\top P_{i, t \mid t-1}^{-1} (\hat{x}_{i, t \mid t-1} - \xi_{i,t}) 
\]
\[
+ \frac{1}{2} (y_{i,t} - H_i \xi_{i,t})^\top R_i^{-1} (y_{i,t} - H_i \xi_{i,t}).
\]
Sensor networks are typically heterogeneous, making \eqref{optimal_estimate_single} suboptimal for the entire network since the decision variable is local. To achieve network-wide optimality, we reformulate \eqref{optimal_estimate_single} as:
\begin{equation}
\label{problem00}
\begin{aligned}
& \underset{\xi_t}{\text{minimize}} & & \sum_{i=1}^N f_{i,t}(\xi_t),
\end{aligned}
\end{equation}
where \( \xi_t = [\xi_{1,t}^\top, \xi_{2,t}^\top, \ldots, \xi_{N,t}^\top]^\top \). However, since Estimator \( i \) does not have access to the entire \( \xi_t \), we adopt a distributed optimization approach based on \cite[Lemma~3.1]{gharesifard2013distributed} and \cite{ryu2023consensus}:
\begin{equation}
\tag{P1}
\label{problem1}
\begin{aligned}
& \underset{\xi_{1,t}, \ldots, \xi_{N,t}}{\text{minimize}} & & \sum_{i=1}^N f_{i,t}(\xi_{i,t}) \\
& \text{subject to} & & \mathbb{L} \xi_t = 0,
\end{aligned}
\end{equation}
where \( \mathbb{L} = \mathcal{L} \otimes I_n \), \( \mathcal{L} = D - A \), with $\mathcal L$ denoting the graph Laplacian, and \( D \in \mathbb{R}^{N \times N} \) is the degree matrix of \( \mathcal{G} \). The constraint \( \mathbb{L} \xi_t = 0 \) ensures \( \xi_{1,t} = \xi_{2,t} = \cdots = \xi_{N,t} \), as the kernel of \( \mathcal{L} \) is spanned by \( 1_N \).
To express \( f_{i,t}(\xi_{i,t}) \) in compact quadratic form, define \( \textbf{z}_{i,t} = [y_{i,t}; \hat{x}_{i, t|t-1}] \), \( \textbf{H}_i = [H_i; I_n] \), and \( \textbf{S}_{i,t} = \text{diag}(R_i, NP_{i, t|t-1}) \). Then:
\[
f_{i,t}(\xi_t) = \frac{1}{2} (\textbf{z}_{i,t} - \textbf{H}_i \xi_t)^\top \textbf{S}_{i,t}^{-1} (\textbf{z}_{i,t} - \textbf{H}_i \xi_t).
\]
Finally, define \( \textbf{z}_t = [\textbf{z}_{1,t}^\top, \cdots, \textbf{z}_{N,t}^\top]^\top \), \( \bar{\textbf{H}} = \text{diag}(\textbf{H}_1, \cdots, \textbf{H}_N) \), and \( \bar{\textbf{S}}_t = \text{diag}(\textbf{S}_{1,t}, \cdots, \textbf{S}_{N,t}) \). The matrix \( \mathcal{H}_t = \mathbb{1}_N^\top \bar{\textbf{H}}^\top \bar{\textbf{S}}_t^{-1} \bar{\textbf{H}} \mathbb{1}_N \), with \( \mathbb{1}_N = 1_N \otimes I_n \), is symmetric positive definite.

Next, we present the distributed correction step for the covariance update using a distributed optimization framework. Let \( \Omega_{t\mid t} = P_{t\mid t}^{-1} \) and \( P_{t\mid t}^{-1} \hat{x}_{t\mid t} \) represent the information matrix and the information vector, respectively. We also define \( \Omega_{t\mid t-1} = P_{t\mid t-1}^{-1} \). For a single estimator, the information matrix prediction and correction steps are given as:
\begin{equation}
\label{cov_update_single}
\begin{aligned}
    \Omega_{t\mid t-1} &= (F\Omega_{t-1\mid t-1}^{-1}F^\top + Q)^{-1}, \\
    \Omega_{t\mid t} &= \Omega_{t\mid t-1} + H^\top \bar{R}^{-1} H.
\end{aligned}
\end{equation}
The convergence of \( \Omega_{t\mid t-1} \) to \( P^{*-1} \) is established in \cite[Lemma~9.5.1 and Prob.~9.17]{kailath2000linear}. In a distributed setting, the convergence of \( \Omega_{t|t-1} \) to \( P^{*-1} \) remains valid if the global information rate matrix \( H^\top \bar{R}^{-1} H \) is available to each estimator. To achieve this, we solve the following consensus optimization problem:
\begin{equation}
\tag{P2}
\label{problem2}
\begin{aligned}
    & \underset{\theta_1, \ldots, \theta_N}{\text{minimize}}
    & & \frac{1}{2} \sum_{i=1}^N \| N\omega_i^\delta - \theta_i \|^2 \\
    & \text{subject to}
    & & (\mathcal{L} \otimes I_{n_{\text{cov}}}) \theta = 0,
\end{aligned}
\end{equation}
where \( \|\cdot\| \) denotes the Euclidean norm, \( \omega_i^\delta = \mathrm{vech}(H_i^\top R_i^{-1} H_i) \in \mathbb{R}^{n_{\text{cov}}} \), \( \mathrm{vech}(\cdot) \) is the half-vectorization of the symmetric matrix $H_i^{\top}R^{-1}_i H_i$, $n_{cov}=\frac{n(n+1)}{2}$, and $\theta_i \in \mathbb{R}^{n_{cov}}$ is the decision variable of $i$th estimator.

\section{Distributed Kalman Filtering using ADMM}
In this section, we derive the distributed Kalman filter algorithm by solving an optimization problem using distributed ADMM. To achieve this, we solve \eqref{problem1} by considering the following augmented Lagrangian:
\begin{equation}
\label{augumented_lag}
    L_{est,t}(\xi_t, \lambda_t) = \sum_{i=1}^N f_{i,t}(\xi_{i,t}) + \lambda_t^\top \sqrt{\mathbb{L}} \xi_t + \frac{\mu}{2} \lVert \sqrt{\mathbb{L}} \xi_t \rVert^2.
\end{equation}
Taking the gradient of \eqref{augumented_lag} with respect to \( \lambda_t \) and \( \xi_t \), we obtain:
\begin{equation}
\label{quad_aug_lag}
\begin{aligned}
    \nabla_{\lambda_t} L_{est,t} &= \sqrt{\mathbb{L}} \xi_t, \\
    \nabla_{\xi_t} L_{est,t} &= \bar{\textbf{H}}^\top \bar{\textbf{S}}_t^{-1} \bar{\textbf{H}} \xi_t - \bar{\textbf{H}}^\top \bar{\textbf{S}}_t^{-1} \textbf{z}_t + \sqrt{\mathbb{L}} \lambda_t + \mu \mathbb{L} \xi_t.
\end{aligned}
\end{equation}
Using \eqref{quad_aug_lag}, the update step for $\lambda_{t,l}$ and $\xi_{t,l}$ can be written as:
\begin{equation}
\label{ADMM_aux_Kalman0}
\begin{aligned}
    \lambda_{t,l+1} &= \lambda_{t,l} + \alpha K_t^{-1} \sqrt{\mathbb{L}} \xi_{t,l}, \\
    \xi_{t,l+1} &= K_t \big( \bar{\textbf{H}}^\top \bar{\textbf{S}}_t^{-1} \textbf{z}_t - \sqrt{\mathbb{L}} \lambda_{t,l+1} - \mu K_t^{-1} \mathbb{L} \xi_{t,l} \big),
\end{aligned}
\end{equation}
where \( K_t = (\bar{\textbf{H}}^\top \bar{\textbf{S}}_t^{-1} \bar{\textbf{H}})^{-1} \).
Due to the structure of \( \sqrt{\mathbb{L}} \), the update law in \eqref{ADMM_aux_Kalman0} cannot be implemented in a fully distributed manner. To enable distributed implementation, we define an auxiliary variable:
\begin{equation}
\label{transfomed_lambda}
    \tilde{\lambda}_{t,l} = \sqrt{\mathbb{L}} \lambda_{t,l}.
\end{equation}
Pre-multiplying the dual variable update in \eqref{ADMM_aux_Kalman0} by \( \sqrt{\mathbb{L}} \), the DKF algorithm for state estimation using distributed ADMM becomes:
\begin{equation}
\label{ADMM_aux_Kalman}
\begin{aligned}
    \tilde{\lambda}_{t,l+1} &= \tilde{\lambda}_{t,l} + \alpha_\lambda K_t^{-1} \mathbb{L} \xi_{t,l}, \\
    \xi_{t,l+1} &= K_t \big( \bar{\textbf{H}}^\top \bar{\textbf{S}}_t^{-1} \textbf{z}_t - \tilde{\lambda}_{t,l+1} - \mu K_t^{-1} \mathbb{L} \xi_{t,l} \big).
\end{aligned}
\end{equation}
The integral feedback term $\tilde{\lambda}_{t,l+1}$ in \eqref{ADMM_aux_Kalman} reduces the steady-state error in the consensus process.

Similarly, for the covariance matrix, we solve the distributed optimization problem given in \eqref{problem2}. To this end, we propose the following augmented Lagrangian:
\begin{equation}
\label{lag3}
L_{cov}(\theta, \nu) = \frac{1}{2}(N\omega^{\delta} - \theta)^\top (N\omega^{\delta} - \theta) + \nu^\top \sqrt{\tilde{\mathbb{L}}} \theta + \frac{\alpha_{\nu}}{2} \lVert \sqrt{\tilde{\mathbb{L}}} \theta \rVert^2,
\end{equation}
where \(\omega^{\delta}= [{\omega^{\delta}_1}; \cdots; {\omega^{\delta}_N} ]\), \( \tilde{\mathbb{L}} = \mathcal{L} \otimes I_{n_{\text{cov}}} \) and \( \alpha_{\nu} > 0 \) is a positive constant. 
To derive the update laws for \( \theta \) and \( \nu \), we take the gradient of the augmented Lagrangian in \eqref{lag3}, yielding:
\begin{equation}
\label{KKT3}
\begin{aligned}
    \nabla_{\theta} L &= -(N\omega^{\delta} - \theta) + \sqrt{\tilde{\mathbb{L}}} \nu + \alpha_{\nu} \tilde{\mathbb{L}} \theta, \\
    \nabla_{\nu} L &= \sqrt{\tilde{\mathbb{L}}} \theta.
\end{aligned}
\end{equation}
To enable a distributed update law for minimizing \eqref{lag3}, we introduce an auxiliary variable:
\begin{equation}
\tilde{\nu}_{t,l} = \sqrt{\tilde{\mathbb{L}}} \nu_{t,l}.
\end{equation}
The distributed update laws for the primal and dual variables are then given as:
\begin{equation}
\label{ADMM_covariance_update}
\begin{aligned}
    \tilde{\nu}_{t,l+1} &= \tilde{\nu}_{t,l} + \alpha_{\nu} \tilde{\mathbb{L}} \theta_{t,l}, \\
    \theta_{t,l+1} &= N\omega^{\delta} - \tilde{\nu}_{t,l+1} - \alpha_{\nu} \tilde{\mathbb{L}} \theta_{t,l}.
\end{aligned}
\end{equation}
Next, we parametrize the solution of \eqref{augumented_lag} using the saddle point equation.
\begin{lemma}
\label{lemma1}
    Let $P_{i, t \mid t-1}$ be a positive definite and symmetric matrix. Then the solution of \eqref{problem1}, considering the augmented Lagrangian \eqref{augumented_lag} is parameterized as $(\xi^*_t, \lambda^*_t)= ((1_N \otimes I_n)\xi^{\dag}_t,(1_N \otimes I_n) \hat{\lambda}_t+\bar{\lambda}_t)$, where $\xi^{\dag}_t = \mathcal H^{-1}_t \mathbb{1}^{\top}_N\bar{\mathbf{H}}^{\top}\bar{\mathbf{S}}_t^{-1}\mathbf{z}_t$. 
\end{lemma}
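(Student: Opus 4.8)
The plan is to characterize the solution through the saddle-point (KKT) conditions of the augmented Lagrangian, which are already available by setting the gradients in \eqref{quad_aug_lag} to zero. The dual-gradient condition $\nabla_{\lambda_t} L_{est,t} = \sqrt{\mathbb{L}}\,\xi^*_t = 0$ immediately constrains the optimal primal variable to the null space of $\sqrt{\mathbb{L}}$. First I would establish that $\ker \sqrt{\mathbb{L}} = \ker \mathbb{L}$: since $\mathcal{L}$ is symmetric positive semidefinite, its principal square root shares the same eigenvectors and has eigenvalue $\sqrt{\lambda_i}$ wherever $\mathcal{L}$ has eigenvalue $\lambda_i$, so the two kernels coincide. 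Because $\mathcal{G}$ is connected (Assumption~2), $\ker \mathcal{L} = \mathrm{span}(1_N)$, whence $\ker \mathbb{L} = \mathrm{range}(1_N \otimes I_n) = \mathrm{range}(\mathbb{1}_N)$. Consequently $\xi^*_t = (1_N \otimes I_n)\,\xi^{\dag}_t$ for some $\xi^{\dag}_t \in \mathbb{R}^n$, which is the consensus structure asserted in the statement.

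Next I would pin down $\xi^{\dag}_t$ using the stationarity condition $\nabla_{\xi_t} L_{est,t} = 0$. Since $\xi^*_t \in \ker \mathbb{L}$ we have both $\mathbb{L}\xi^*_t = 0$ and $\sqrt{\mathbb{L}}\,\xi^*_t = 0$, so the condition reduces to $\bar{\textbf{H}}^\top \bar{\textbf{S}}_t^{-1}\bar{\textbf{H}}\,\xi^*_t - \bar{\textbf{H}}^\top \bar{\textbf{S}}_t^{-1}\textbf{z}_t + \sqrt{\mathbb{L}}\,\lambda^*_t = 0$. The key algebraic fact I would use is $\mathbb{1}_N^\top \sqrt{\mathbb{L}} = (1_N^\top \sqrt{\mathcal{L}}) \otimes I_n = 0$, which holds because $1_N \in \ker\sqrt{\mathcal{L}}$ and $\sqrt{\mathcal{L}}$ is symmetric. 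Pre-multiplying the reduced condition by $\mathbb{1}_N^\top$ therefore annihilates the dual term, and substituting $\xi^*_t = \mathbb{1}_N \xi^{\dag}_t$ yields $\mathcal{H}_t\,\xi^{\dag}_t = \mathbb{1}_N^\top \bar{\textbf{H}}^\top \bar{\textbf{S}}_t^{-1}\textbf{z}_t$. Since $\mathcal{H}_t$ is symmetric positive definite by construction, inverting gives exactly $\xi^{\dag}_t = \mathcal{H}_t^{-1}\mathbb{1}_N^\top \bar{\textbf{H}}^\top \bar{\textbf{S}}_t^{-1}\textbf{z}_t$.

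Finally, for the dual variable I would read the stationarity condition as the linear system $\sqrt{\mathbb{L}}\,\lambda^*_t = r_t$, where $r_t := \bar{\textbf{H}}^\top \bar{\textbf{S}}_t^{-1}\textbf{z}_t - \bar{\textbf{H}}^\top \bar{\textbf{S}}_t^{-1}\bar{\textbf{H}}\,\xi^*_t$. The solvability check is precisely the projection computed above: $\mathbb{1}_N^\top r_t = 0$ shows $r_t \perp \ker\sqrt{\mathbb{L}}$, hence $r_t$ lies in $\mathrm{range}(\sqrt{\mathbb{L}})$ and a solution exists. I expect this last step to be the main obstacle, because $\sqrt{\mathbb{L}}$ is singular and the dual is only determined up to its kernel, so the ``solution'' is really an affine set. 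The resolution is to decompose $\lambda^*_t$ into its range and kernel parts: the particular (minimum-norm) solution $\bar{\lambda}_t = (\sqrt{\mathbb{L}})^{+} r_t$ lying in $\mathrm{range}(\sqrt{\mathbb{L}}) = (\ker \sqrt{\mathbb{L}})^{\perp}$, plus an arbitrary kernel element, which by the characterization in the first paragraph has the form $(1_N \otimes I_n)\hat{\lambda}_t$ for some $\hat{\lambda}_t \in \mathbb{R}^n$. This produces the claimed parameterization $\lambda^*_t = (1_N \otimes I_n)\hat{\lambda}_t + \bar{\lambda}_t$ and makes explicit that the free parameter $\hat{\lambda}_t$ encodes the inherent non-uniqueness of the dual.
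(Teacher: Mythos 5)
Your proposal is correct and follows essentially the same route as the paper: both extract the consensus structure of $\xi^*_t$ from the primal feasibility condition $\sqrt{\mathbb{L}}\,\xi^*_t = 0$ together with $\ker \mathcal{L} = \mathrm{span}(1_N)$, and both parameterize $\lambda^*_t$ as an arbitrary kernel component plus a particular solution of the stationarity equation (the paper via the eigendecomposition $\mathcal{L}U = U\Lambda$, you via the pseudoinverse of $\sqrt{\mathbb{L}}$, which are equivalent here). If anything, your write-up is slightly more complete than the paper's: you explicitly derive $\xi^{\dag}_t = \mathcal{H}_t^{-1}\mathbb{1}_N^{\top}\bar{\mathbf{H}}^{\top}\bar{\mathbf{S}}_t^{-1}\mathbf{z}_t$ by left-multiplying the stationarity condition with $\mathbb{1}_N^{\top}$, and you verify solvability of the dual equation via $r_t \perp \ker\sqrt{\mathbb{L}}$ --- two steps the paper's proof leaves implicit.
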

\begin{proof}
    Let $(\xi^*_t, \lambda^*_t)$ be the solution of \eqref{augumented_lag}. Using the Karush-Kuhn-Tucker (KKT) condition \cite[Theorem 12.1]{nocedal1999numerical}, \eqref{quad_aug_lag} can be written as:
    \begin{equation}
\label{KKT4}
    \begin{bmatrix}
\bar{\mathbf{H}}^{\top}\bar{\mathbf{S}}_t^{-1}\bar{\mathbf{H}}+ \mu \mathbb{L} & \sqrt{\mathbb{L}} \\
       \sqrt{\mathbb{L}} & \mathbb{O}
    \end{bmatrix}
    \begin{bmatrix}
        \xi_t^*\\
        \lambda_t^*
    \end{bmatrix}= \begin{bmatrix}
        \bar{\mathbf{H}}^{\top}\bar{\mathbf{S}}_t^{-1}\mathbf{z}_t \\
        0
    \end{bmatrix}.
\end{equation}
   From the primal feasibility condition in \eqref{KKT4}, we have:
    \[
    \sqrt{\mathbb{L}} \xi^*_t = 0,
    \]
    implying \( \xi^*_t \) lies in the nullspace of \( \sqrt{\mathbb{L}} \), which is spanned by \( 1_N \otimes I_n \). Thus, \( \xi^*_t = (1_N \otimes I_n) \xi^{\dag}_t \), where \( \xi^{\dag}_t \in \mathbb{R}^n \). 
For the dual variable, the dual feasibility condition in \eqref{KKT4} can be written as:
\begin{equation}
\begin{aligned}
(\bar{\mathbf{H}}^{\top}\bar{\mathbf{S}}_t^{-1} & \bar{\mathbf{H}}+ \mu \mathbb{L})(1_N \otimes I_n)\xi^{\dag}_t+\sqrt{\mathbb{L}}\lambda^*_t = \bar{\mathbf{H}}^{\top}\bar{\mathbf{S}}_t^{-1}\mathbf{z}_t \\
    \sqrt{\mathbb{L}}\lambda^*_t &= \bar{\mathbf{H}}^{\top}\bar{\textbf{S}}_t^{-1}\mathbf{z}_t- (\bar{\mathbf{H}}^{\top}\bar{\mathbf{S}}_t^{-1}\bar{\mathbf{H}}+ \mu \mathbb{L})(1_N \otimes I_n)\xi^{\dag}_t \\
    \mathbb{L} \lambda^*_t &= \sqrt{\mathbb{L}}(\bar{\mathbf{H}}^{\top}\bar{\mathbf{S}}_t^{-1}\mathbf{z}_t- (\bar{\mathbf{H}}^{\top}\bar{\mathbf{S}}_t^{-1}\bar{\mathbf{H}}+ \mu \mathbb{L})(1_N \otimes I_n)\xi^{\dag}_t).
\end{aligned}
\end{equation}
Let $b = \sqrt{\mathbb{L}}(\bar{\mathbf{H}}^{\top}\bar{\mathbf{S}}_t^{-1}\mathbf{z}_t- (\bar{\mathbf{H}}^{\top}\bar{\mathbf{S}}_t^{-1}\bar{\mathbf{H}}+ \mu \mathbb{L})(1_N \otimes I_n)\xi^{\dag}_t)$. We know that $\mathcal L U = U \Lambda$ where $\Lambda = \diag(0, \lambda_2, \cdots, \lambda_N)$, $ U = \begin{bmatrix}
    u_N \; 
    \bar{U}
\end{bmatrix}$ with $u_N = \frac{1}{\sqrt{N}}1_N$, $1_N^{\top}\bar{U}= 0$, and $\bar{U}^{\top}\bar{U} = I_{N-1}$, we get 
$\lambda^*_t = U \otimes I_n [\hat{\lambda}_t; \bar{\Lambda}^{-1}\bar{U}^{\top} b]$.
\end{proof}
Next, we parameterize the solution of \eqref{problem2} using the KKT condition.
\begin{lemma}\label{lemma2}
  The solution of \eqref{problem2} is parameterized as $(\theta^*_t, \nu^*_t)= ((1_N \otimes I_{n_{cov}})\theta^{\dag}_t,(1_N \otimes I_{n_{cov}}) \tilde{\nu}+\bar{\nu})$.     
\end{lemma}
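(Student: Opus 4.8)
The plan is to mirror the argument of Lemma~\ref{lemma1}, since \eqref{problem2} has the same quadratic-plus-consensus structure as \eqref{problem1}. First I would assemble the KKT system from the stationarity and primal-feasibility conditions read off from \eqref{KKT3}. Setting both gradients to zero at the optimum $(\theta^*_t,\nu^*_t)$ yields the coupled linear system
\begin{equation*}
\begin{bmatrix}
I + \alpha_{\nu}\tilde{\mathbb{L}} & \sqrt{\tilde{\mathbb{L}}} \\
\sqrt{\tilde{\mathbb{L}}} & \mathbb{O}
\end{bmatrix}
\begin{bmatrix}\theta^*_t\\\nu^*_t\end{bmatrix}
=
\begin{bmatrix}N\omega^{\delta}\\0\end{bmatrix},
\end{equation*}
whose lower block is the primal feasibility condition $\sqrt{\tilde{\mathbb{L}}}\theta^*_t = 0$.

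Second, from primal feasibility $\theta^*_t$ must lie in the nullspace of $\sqrt{\tilde{\mathbb{L}}}$, which, because $\mathcal{G}$ is connected (Assumption~2), is spanned by $1_N \otimes I_{n_{cov}}$. Hence $\theta^*_t = (1_N\otimes I_{n_{cov}})\theta^{\dag}_t$ for some $\theta^{\dag}_t \in \mathbb{R}^{n_{cov}}$, which supplies the first component of the claimed parameterization. Since $\sqrt{\tilde{\mathbb{L}}}\theta^*_t=0$ also forces $\tilde{\mathbb{L}}\theta^*_t=0$, the stationarity row collapses to $\sqrt{\tilde{\mathbb{L}}}\nu^*_t = N\omega^{\delta}-\theta^*_t$.

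Third, I would solve this singular equation for $\nu^*_t$ exactly as the vector $b$ is handled in Lemma~\ref{lemma1}. Pre-multiplying by $\sqrt{\tilde{\mathbb{L}}}$ gives $\tilde{\mathbb{L}}\nu^*_t = \tilde b$ with $\tilde b = \sqrt{\tilde{\mathbb{L}}}(N\omega^{\delta}-\theta^*_t)$, and invoking the spectral factorization $\mathcal{L}U = U\Lambda$, $U = [u_N\ \bar U]$, $\Lambda = \diag(0,\lambda_2,\dots,\lambda_N)$, the solution decomposes as $\nu^*_t = (U\otimes I_{n_{cov}})[\tilde\nu;\ \bar\Lambda^{-1}\bar U^{\top}\tilde b]$. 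The component along the zero eigenvalue is the free parameter $\tilde\nu$, contributing the nullspace term $(1_N\otimes I_{n_{cov}})\tilde\nu$ (up to the scaling $u_N = 1_N/\sqrt N$), while the remaining terms form the uniquely determined range component $\bar\nu$. This yields $\nu^*_t = (1_N\otimes I_{n_{cov}})\tilde\nu + \bar\nu$, as claimed.

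The main obstacle I anticipate is the solvability (consistency) of $\sqrt{\tilde{\mathbb{L}}}\nu^*_t = N\omega^{\delta}-\theta^*_t$: because $\sqrt{\tilde{\mathbb{L}}}$ is singular, a solution $\nu^*_t$ exists only if the right-hand side is orthogonal to $\ker(\sqrt{\tilde{\mathbb{L}}})$, i.e.\ $(1_N^{\top}\otimes I_{n_{cov}})(N\omega^{\delta}-\theta^*_t)=0$. Evaluating this condition gives $N\sum_{i=1}^N \omega_i^{\delta} - N\theta^{\dag}_t = 0$, which pins down $\theta^{\dag}_t = \sum_{i=1}^N \omega_i^{\delta}$, the network-wide average of the local information-rate terms. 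I would therefore close the argument by verifying this orthogonality, which simultaneously identifies $\theta^{\dag}_t$ and guarantees that the range component $\bar\nu$ is well defined.
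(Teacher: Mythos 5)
Your proposal is correct and takes essentially the same route as the paper, whose proof of Lemma~\ref{lemma2} is simply stated to follow the path of Lemma~\ref{lemma1}: KKT system, primal feasibility forcing $\theta^*_t$ into the nullspace of $\sqrt{\tilde{\mathbb{L}}}$ spanned by $1_N \otimes I_{n_{cov}}$, and spectral decomposition $\mathcal{L}U = U\Lambda$ to split $\nu^*_t$ into a free nullspace component and a determined range component. Your closing consistency check, which the paper leaves implicit, is a sound refinement: orthogonality of $N\omega^{\delta}-\theta^*_t$ to $\ker(\sqrt{\tilde{\mathbb{L}}})$ correctly pins down $\theta^{\dag}_t = \sum_{i=1}^N \omega_i^{\delta} = \mathrm{vech}(H^{\top}\bar{R}^{-1}H)$, the very quantity whose recovery Theorem~\ref{theorem1_covariance} later establishes.
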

\begin{proof}
     The proof of the Lemma~\ref{lemma2} follows the same path as the proof of Lemma~\ref{lemma1}, thus omitted.
\end{proof}
A pseudo-code implementation of the proposed filtering method is provided in Algorithm \ref{Algo_ADMM2}. 
 \begin{algorithm}[h!]
		\caption{DKF using ADMM}\label{Algo_ADMM2}
		\begin{algorithmic}[1]
  \Function{$[\hat{x}_{i, t|t}, \, P_{i, t|t}] = \text{DKF}$}{$\hat{x}_{i, 0\mid 0}, P_{i, 0\mid 0}$}. 
  \For{$t = 1, \ldots, T$}
  \State Compute the prior mean at node $i$, 
    \Statex \hspace{0.5 cm} $\hat{x}_{i, t\mid t-1}= F \hat{x}_{i, t-1 \mid t-1}.$
  \State Evaluate the prior error covariance at node $i$, \Statex \hspace{0.5 cm} $P_{i, t \mid t-1} = F P_{i, t-1 \mid t-1} F^{\top}+ Q$.
  \State $\xi_{i,t,0}=\hat{x}_{i, t\mid t-1}$, $\tilde{\lambda}_{i, t,0}=0$.
  \For{$l = 0, \ldots, L-1$}
  \State  $\tilde{\lambda}_{i, t,l+1} = \tilde{\lambda}_{i, t, l} + \alpha_{\lambda}  K^{-1}_{i, t}\sum_{j \in \mathcal{N}_i} a_{ij} (\xi_{i, t, l} - \xi_{j, t, l})$,
  \Statex \hspace{0.75 cm}where $K^{-1}_{i,t}= H_i^\top R_i^{-1} H_i + \frac{1}{N}P_{i, t\mid t-1}^{-1}$.
\State $ \xi_{i,t, l+1} = K_{i, t} (H_i^\top R_i^{-1} y_{i, t} + \frac{1}{N}P_{i, t\mid t-1}^{-1} \hat{x}_{i, t\mid t-1} $
\Statex \hspace{1.9 cm}$- \tilde{\lambda}_{i, t,l+1} - \mu K^{-1}_{i,t}\sum_{j \in \mathcal{N}_i} a_{ij} (\xi_{i,t, l} - \xi_{j,t, l}) .$ 
\EndFor
\State  $\nu_{i, t} = \nu_{i, t-1} + \alpha_{\nu, i}  \sum_{j \in \mathcal{N}_i} a_{ij} (\theta_{i, t} - \theta_{j, t})$
\State $\theta_{i, t} = N \omega^{\delta}_i - \nu_{i, t}  - \alpha_{\nu} \sum_{j \in \mathcal{N}_i} a_{ij} (\theta_{i, t} - \theta_{j, t})$.
\State  $\hat{x}_{i,t \mid t}= \xi_{i,t, L}$,  $P_{i, t\mid t} = (P_{i, t\mid t-1}^{-1} + \Theta_{i, t})^{-1}$, 
\Statex \hspace{0.5 cm}where  $\Theta_{i, t} = \text{vech}^{-1}(\theta_{i, t})$. 
\EndFor
\EndFunction
  \end{algorithmic}
  \end{algorithm} 
We notice that the update step for the state estimate in \eqref{ADMM_aux_Kalman} are independent from the update step of the information rate matrix given in \eqref{ADMM_covariance_update}. Thus, in the following, we first show the boundedness and convergence of the covariance matrix for all the local estimators. Thenceforth, we show the convergence of the state estimate for all local estimators. 

%
\section{Stability Analysis}
In this section, we analyze the stability of the proposed DKF algorithm. The update laws for the state estimate \eqref{ADMM_aux_Kalman} and the posterior covariance \eqref{ADMM_covariance_update} are modeled as discrete-time dynamical systems \cite{nishihara2015general}. Using tools from system theory, we derive conditions on the design parameters that guarantee the asymptotic stability of the update laws given in \eqref{ADMM_aux_Kalman} and \eqref{ADMM_covariance_update}. Finally, we show that all local estimators are unbiased, that is,  \( \lim_{t \to \infty} \mathbb{E}[x_t - \xi_{i,t}] = 0 \).
\begin{theorem}
\label{theorem1_covariance}
Let the communication network \( \mathcal{G} \) of local estimators be undirected and connected, and let \( \mathcal{L} \) be the Laplacian matrix of \( \mathcal{G} \). If \( 0 < \alpha_{\nu} < \frac{2}{3 \lambda_{\text{max}}(\mathcal{L})} \), then the sequence \( \{\Theta_{i,t} = \mathrm{vech}^{-1}(\theta_{i,t})\}_t \) generated by \eqref{ADMM_covariance_update} converges to the global information rate matrix \( H^\top \bar{R}^{-1} H \).
\end{theorem}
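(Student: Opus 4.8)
The plan is to recast the coupled recursion \eqref{ADMM_covariance_update} as an autonomous linear time-invariant (LTI) system and certify that its relevant modes are Schur stable; since the covariance update involves no sub-iterations, its iteration index coincides with the time index $t$. First I would eliminate $\tilde{\nu}_{t,l+1}$ by substituting the dual update into the primal update, giving $\theta_{t,l+1} = N\omega^\delta - \tilde{\nu}_{t,l} - 2\alpha_\nu\tilde{\mathbb{L}}\theta_{t,l}$, so that the stacked state $\zeta_l = [\theta_{t,l}^\top,\tilde{\nu}_{t,l}^\top]^\top$ obeys $\zeta_{l+1} = M\zeta_l + c$ with
\begin{equation*}
M = \begin{bmatrix} -2\alpha_\nu\tilde{\mathbb{L}} & -I \\ \alpha_\nu\tilde{\mathbb{L}} & I \end{bmatrix}, \qquad c = \begin{bmatrix} N\omega^\delta \\ 0 \end{bmatrix}.
\end{equation*}
Next I would identify the equilibrium $\zeta^* = [\theta^{*\top},\tilde{\nu}^{*\top}]^\top$. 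Setting $\zeta^* = M\zeta^* + c$ forces $\tilde{\mathbb{L}}\theta^* = 0$, so by Lemma~\ref{lemma2} the equilibrium is a consensus point $\theta^* = (1_N\otimes I_{n_{cov}})\theta^\dag$. To pin down $\theta^\dag$ I would use that $\tilde{\nu}_{t,l} = \sqrt{\tilde{\mathbb{L}}}\nu_{t,l}$ always lies in $\mathrm{range}(\sqrt{\tilde{\mathbb{L}}})$, which is orthogonal to the consensus subspace; projecting the $\theta$-update onto $1_N\otimes I_{n_{cov}}$ then annihilates both $\tilde{\nu}$ and $\tilde{\mathbb{L}}\theta$, leaving $\tfrac{1}{N}(1_N^\top\otimes I_{n_{cov}})\theta_{t,l+1} = \sum_{i=1}^N\omega_i^\delta = \mathrm{vech}(H^\top\bar{R}^{-1}H)$. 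Hence $\theta^\dag = \mathrm{vech}(H^\top\bar{R}^{-1}H)$ and $\Theta_i^* = H^\top\bar{R}^{-1}H$ at every node, which is exactly the claimed limit.

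With the target fixed, I would study the error $e_l = \zeta_l - \zeta^*$, which satisfies the homogeneous recursion $e_{l+1} = Me_l$. The key simplification is to diagonalize $\mathcal{L} = U\Lambda U^\top$ with $\Lambda = \diag(0,\lambda_2,\ldots,\lambda_N)$; since $M$ is built only from $\tilde{\mathbb{L}} = \mathcal{L}\otimes I_{n_{cov}}$ and identities, the change of basis $U^\top\otimes I_{n_{cov}}$ block-diagonalizes the dynamics into $N$ decoupled blocks
\begin{equation*}
M_k = \begin{bmatrix} -2\alpha_\nu\lambda_k & -1 \\ \alpha_\nu\lambda_k & 1 \end{bmatrix}, \qquad k = 1,\ldots,N,
\end{equation*}
each replicated $n_{cov}$ times. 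The mode $\lambda_1 = 0$ is the consensus mode and must be handled separately: there $M_1$ has eigenvalues $0$ and $1$, so it is only marginally stable. I would argue this mode is benign by letting $P_\parallel$ denote the orthogonal projection onto $1_N\otimes I_{n_{cov}}$ and noting that the dual-error component stays identically zero along the consensus direction (because $\tilde{\nu}_{t,l}\in\mathrm{range}(\sqrt{\tilde{\mathbb{L}}})$); projecting $e_{l+1} = Me_l$ then gives $P_\parallel e_{\theta,l+1} = 0$ for all $l\ge 0$, so the consensus component of $\theta$ reaches $\theta^\dag$ after a single iteration and the unit eigenvalue is never excited.

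It then remains to certify $\rho(M_k) < 1$ for the disagreement modes $k = 2,\ldots,N$, which is where the main work lies. For each such block the characteristic polynomial is $p_k(\rho) = \rho^2 + (2\alpha_\nu\lambda_k - 1)\rho - \alpha_\nu\lambda_k$; applying the Jury stability conditions for a monic quadratic, $p_k(1) > 0$, $p_k(-1) > 0$, and $|p_k(0)| < 1$, the first reduces to $\alpha_\nu\lambda_k > 0$, the third to $\alpha_\nu\lambda_k < 1$, and the second to $2 - 3\alpha_\nu\lambda_k > 0$, i.e. $\alpha_\nu\lambda_k < \tfrac{2}{3}$, which is the binding constraint. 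Since this must hold for every positive eigenvalue, the worst case is $\lambda_{\max}(\mathcal{L})$, so $0 < \alpha_\nu < \frac{2}{3\lambda_{\max}(\mathcal{L})}$ makes every disagreement block Schur stable and drives $e_l\to 0$; combined with the one-step convergence of the consensus component, this yields $\theta_{t,l}\to\theta^*$ and hence $\Theta_{i,t}\to H^\top\bar{R}^{-1}H$. I expect the delicate step to be the rigorous treatment of the marginal consensus mode: a naive spectral-radius argument applied to the full matrix $M$ would fail because of the eigenvalue on the unit circle, so the proof must show that the constant forcing $c$ locks the average of $\theta$ onto $\mathrm{vech}(H^\top\bar{R}^{-1}H)$ while the eigenvalue-$1$ direction is never actually excited.
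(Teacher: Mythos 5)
Your proof is correct and reaches the paper's bound by essentially the same route: diagonalization along the Laplacian eigenbasis and Schur stability of per-mode \(2\times 2\) blocks, with the zero eigenvalue treated separately. The packaging differs in three ways worth noting. First, the paper eliminates the dual variable to obtain the second-order recursion \eqref{theta_second} in \(\theta\) alone, analyzes companion-form blocks, and must then separately prove boundedness of the range-space dual component \(\bar{\psi}_{t,l}\) via a geometric-series estimate; you instead keep the coupled first-order system in \((\theta,\tilde{\nu})\). Your block \(M_k\) has the same characteristic polynomial \(\rho^2 + (2\alpha_\nu\lambda_k - 1)\rho - \alpha_\nu\lambda_k\) as the paper's companion matrix \(M_i\), so the spectral condition is identical, but as a byproduct your argument shows the disagreement component of the dual actually converges to its equilibrium \(\tilde{\nu}^* = N\omega^\delta - \theta^*\) rather than merely staying bounded, dispensing with the paper's separate \(\bar{\psi}\) argument. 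Second, the paper writes the eigenvalues of \(M_i\) explicitly and simply asserts that \(|\lambda(M_i)|<1\) holds iff \(0 < \alpha_\nu < \frac{2}{3\lambda_{\text{max}}(\mathcal{L})}\); your Jury test --- \(p_k(1) = \alpha_\nu\lambda_k > 0\), \(p_k(-1) = 2 - 3\alpha_\nu\lambda_k > 0\) (binding), \(|p_k(0)| = \alpha_\nu\lambda_k < 1\) --- supplies the algebra the paper omits and correctly identifies \(p_k(-1)>0\) as the source of the factor \(3\). Third, your treatment of the marginal mode (eigenvalues \(0\) and \(1\) of \(M_1\)) is more explicit than the paper's remark that the nullspace component ``remains constant'': by projecting the update onto the consensus subspace and using \(\tilde{\nu}_{t,l} \in \mathrm{range}(\sqrt{\tilde{\mathbb{L}}})\), you show the unit eigenvalue is never excited and the average of \(\theta\) locks onto \(\mathrm{vech}(H^\top \bar{R}^{-1} H)\) after a single step. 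Note that both your argument and the paper's tacitly require the dual to start with zero consensus component --- guaranteed here by the definition \(\tilde{\nu} = \sqrt{\tilde{\mathbb{L}}}\,\nu\) and the zero initialization in Algorithm~\ref{Algo_ADMM2} --- a hypothesis you rightly surface rather than leave implicit.
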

\begin{proof}
The update law for \( \theta_{t,l} \) in \eqref{ADMM_covariance_update} can be written as:
\begin{equation}
    \label{theta_second}
    \theta_{t,l+1} = (\mathbb{I} - 2\alpha_{\nu} \tilde{\mathbb{L}}) \theta_{t,l} + \alpha_{\nu} \tilde{\mathbb{L}} \theta_{t,l-1}.
\end{equation}
To analyze the stability, we apply the coordinate transformation:
\begin{equation}
    \label{transformation_coordinate}
    \theta_{t,l} = \mathbb{U}^\top \phi_{t,l}, \quad \tilde{\nu}_{t,l} = \mathbb{U}^\top \psi_{t,l},
\end{equation}
where \( \mathbb{U} = U \otimes I_{n_{\text{cov}}} \).
Substituting \eqref{transformation_coordinate} into \eqref{theta_second}, we obtain:
\begin{equation}
    \label{transformed_dynamics}
    \begin{split}
        \psi_{t,l+1} &= \psi_{t,l} + \alpha_{\nu} \left( \Lambda \otimes I_{n_{cov}} \right) \phi_{t,l},\\
        \phi_{t,l+1} &= (I - 2\alpha_{\nu} (\Lambda \otimes I_{n_{\text{cov}}})) \phi_{t,l} + \alpha_{\nu} (\Lambda \otimes I_{n_{\text{cov}}}) \phi_{t,l-1}.
    \end{split}
\end{equation}
Let \( \psi_{t,l} = [\tilde{\psi}_{t,l}^\top, \; \bar{\psi}_{t,l}^\top]^\top \) and \( \phi_{t,l} = [\tilde{\phi}_{t,l}^\top, \; \bar{\phi}_{t,l}^\top]^\top \), where \( \tilde{\psi}_{t,l} \) and \( \tilde{\phi}_{t,l} \) correspond to the nullspace of \( \tilde{\mathbb{L}} \), whereas \( \bar{\psi}_{t,l} \) and \( \bar{\phi}_{t,l} \) correspond to the range space of \( \tilde{\mathbb{L}}\). The dynamics in \eqref{transformed_dynamics} decompose as:
\begin{equation}
    \label{decomposed_dynamics}
    \begin{aligned}
     \tilde{\psi}_{t, l+1} &= \tilde{\psi}_{t,l} \\
        \bar{\psi}_{t,l+1} &= \bar{\psi}_{t,l} + \alpha_{\nu} (\tilde{\Lambda} \otimes I_{n_{cov}}) \bar{\phi}_{t,l}, \\
        \tilde{\phi}_{t,l+1} &= \tilde{\phi}_{t,l}, \\
        \bar{\phi}_{t,l+1} &= (I - 2\alpha_{\nu} (\tilde{\Lambda} \otimes I_{n_{\text{cov}}})) \bar{\phi}_{t,l} + \alpha_{\nu} (\tilde{\Lambda} \otimes I_{n_{\text{cov}}}) \bar{\phi}_{t,l-1},
    \end{aligned}
\end{equation}
where \( \tilde{\Lambda} = \mathrm{diag}(\lambda_2, \cdots, \lambda_N) \).
Defining \( \hat{\phi}_{t,l} = [\bar{\phi}_{t,l}^\top, \bar{\phi}_{t,l-1}^\top]^\top \), we write the dynamics of \( \bar{\phi}_{t,l} \) as:
\begin{equation}
    \label{compact_dynamics}
    \hat{\phi}_{t,l+1} = M \hat{\phi}_{t,l},
\end{equation}
where
\[
M = \begin{bmatrix}
    I - 2\alpha_{\nu} (\tilde{\Lambda} \otimes I_{n_{\text{cov}}}) & \alpha_{\nu} (\tilde{\Lambda} \otimes I_{n_{\text{cov}}}) \\
    I & \mathbb{O}
\end{bmatrix}.
\]
The eigenvalues of \( M \) determine the stability of the system given in \eqref{compact_dynamics}. For each eigenvalue \( \lambda_i(\tilde{\Lambda}) \), we define \[ M_i=\begin{bmatrix}
    1 - 2\alpha_{\nu} \lambda_i(\tilde{\Lambda})  & \alpha_{\nu} \lambda_i(\tilde{\Lambda})  \\
    1 & 0
\end{bmatrix}. \]
The eigenvalues of $M_i$ can be written as:
\[
\lambda(M_i) = \frac{1}{2} \left( 1 - 2\alpha_{\nu} \lambda_i \pm \sqrt{(1 - 2\alpha_{\nu} \lambda_i)^2 + 4\alpha_{\nu} \lambda_i} \right).
\]
For the asymptotic stability of \eqref{compact_dynamics}, \( |\lambda(M_i)| < 1 \). This condition holds if \( 0 < \alpha_{\nu} < \frac{2}{3 \lambda_{\text{max}}(\mathcal{L})} \).
To ensure the stability of the entire system, we analyze the boundedness of \( \bar{\psi}_{t,l} \). From \eqref{decomposed_dynamics}, the dynamics of \( \bar{\psi}_{t,l} \) are given by:
\begin{equation}
    \label{stability_psi}
    \bar{\psi}_{t,l+1} = \sum_{k=0}^{l} \alpha_{\nu} (\tilde{\Lambda} \otimes I_{n_{\text{cov}}}) \bar{\phi}_{t,k}.
\end{equation}
Since \( \bar{\phi}_{t,l} \to 0 \) exponentially as \( l \to \infty \), there exist constants \( C < \infty \) and \( 0 < \rho < 1 \) such that:
\begin{equation}
    \label{phi_decay}
    \| \bar{\phi}_{t,l} \| \leq C \rho^l.
\end{equation}
Using \eqref{phi_decay}, the norm of \( \bar{\psi}_{t,l} \) is bounded as:
\begin{equation}
    \label{psi_bound}
    \| \bar{\psi}_{t,l} \| \leq \alpha_{\nu} \| \tilde{\Lambda} \otimes I_{n_{\text{cov}}} \| \frac{C}{1 - \rho}.
\end{equation}
Thus, \( \bar{\psi}_{t,l} \) remains bounded.
Since \( \tilde{\phi}_{t,l} \) corresponds to the nullspace of \( \tilde{\mathbb{L}} \), it remains constant. Specifically, we have:
\[
\tilde{\phi}_{t,\infty} = (u_N^\top \otimes I_{n_{\text{cov}}}) N \omega^\delta = \sqrt{N}\sum^N_{i=1}\mathrm{vech}(H^\top_i {R}^{-1}_i H_i).
\]
Thus, under the condition \( 0 < \alpha_{\nu} < \frac{2}{3 \lambda_{\text{max}}(\mathcal{L})} \), \( \{\Theta_{i,t}\}_t \) converges to \( H^\top \bar{R}^{-1} H \), completing the proof.
\end{proof}
Notice that the distributed optimization problem in \eqref{problem2} is static. Consequently, the update law in \eqref{ADMM_covariance_update} can be expressed without sub-iterations as (as in Algorithm~\ref{Algo_ADMM2}):
\begin{equation}
    \label{ADMM_covariance_update_noiteration}
    \begin{split}
        \tilde{\nu}_{t+1} &= \tilde{\nu}_{t} + \alpha_{\nu} \tilde{\mathbb{L}} \theta_{t}, \\
        \theta_{t+1} &= N\omega^{\delta} - \tilde{\nu}_{t+1} - \alpha_{\nu} \tilde{\mathbb{L}} \theta_{t}.
    \end{split}
\end{equation}
The sequences generated by \eqref{ADMM_covariance_update_noiteration} converge under the same conditions stated in Theorem~\ref{theorem1_covariance}, which we summarize in the following corollary.
\begin{corollary}
    Let the communication network \( \mathcal{G} \) of local estimators be undirected and connected, and let \( \mathcal{L} \) be the Laplacian matrix of \( \mathcal{G} \). If \( 0 < \alpha_{\nu} < \frac{2}{3 \lambda_{\text{max}}(\mathcal{L})} \), then the sequence \( \{\Theta_{i,t} = \mathrm{vech}^{-1}(\theta_{i,t})\}_t \) generated by \eqref{ADMM_covariance_update_noiteration} converges to the global information rate matrix \( H^\top \bar{R}^{-1} H \).
\end{corollary}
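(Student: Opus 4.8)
The plan is to exploit the observation --- emphasized just before the corollary --- that the optimization problem \eqref{problem2} is static: the vector $\omega^\delta$ is assembled from the fixed matrices $H_i^\top R_i^{-1} H_i$, which do not vary with the time index $t$. Consequently, the recursion \eqref{ADMM_covariance_update_noiteration} is \emph{formally identical} to the sub-iteration recursion \eqref{ADMM_covariance_update} analysed in Theorem~\ref{theorem1_covariance}, with the running index $l$ simply relabelled as $t$. My first step is therefore to establish this correspondence explicitly, checking that no term in \eqref{ADMM_covariance_update_noiteration} carries a genuine dependence on $t$ beyond the iteration counter (in particular that $\omega^\delta$, $\tilde{\mathbb{L}}$, and $\alpha_\nu$ are all constant in $t$), so that the two update laws define the same autonomous linear system.

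Given that correspondence, I would reproduce the elimination step from the proof of Theorem~\ref{theorem1_covariance}: substituting the $\tilde{\nu}$-update into the $\theta$-update yields the second-order linear recursion
\begin{equation}
    \theta_{t+1} = (\mathbb{I} - 2\alpha_{\nu} \tilde{\mathbb{L}}) \theta_{t} + \alpha_{\nu} \tilde{\mathbb{L}} \theta_{t-1},
\end{equation}
which is exactly \eqref{theta_second} with $l \mapsto t$. I would then apply the same coordinate transformation $\theta_{t} = \mathbb{U}^\top \phi_{t}$, $\tilde{\nu}_{t} = \mathbb{U}^\top \psi_{t}$ with $\mathbb{U} = U \otimes I_{n_{\text{cov}}}$, decompose the dynamics into the nullspace and range-space components of $\tilde{\mathbb{L}}$, and stack the range-space part into the companion form $\hat{\phi}_{t+1} = M \hat{\phi}_{t}$ with $\hat{\phi}_{t} = [\bar{\phi}_{t}^\top, \bar{\phi}_{t-1}^\top]^\top$. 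The eigenvalue analysis of the block $M_i$ is unchanged, so $|\lambda(M_i)| < 1$ holds precisely when $0 < \alpha_\nu < \frac{2}{3\lambda_{\text{max}}(\mathcal{L})}$, forcing the range-space component $\bar{\phi}_{t} \to 0$ exponentially while the nullspace component $\tilde{\phi}_{t}$ is invariant and fixes the consensus value $\tilde{\phi}_\infty = \sqrt{N}\sum_{i=1}^N \mathrm{vech}(H_i^\top R_i^{-1} H_i)$.

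Since the argument is a verbatim transcription of Theorem~\ref{theorem1_covariance}, there is essentially no new obstacle; the only point requiring care is the justification that collapsing the nested loop into a single iteration is legitimate. This rests entirely on the static nature of \eqref{problem2}: in the state-estimation recursion \eqref{ADMM_aux_Kalman} the data $\bar{\mathbf{S}}_t$ changes with $t$ through $P_{i,t\mid t-1}$, which is why sub-iterations are needed there, whereas the covariance/information-rate problem has no such time-varying data. I would state this distinction explicitly to make clear why the same $\alpha_\nu$-bound transfers, and then conclude that $\{\Theta_{i,t} = \mathrm{vech}^{-1}(\theta_{i,t})\}_t$ converges to the global information rate matrix $H^\top \bar{R}^{-1} H$ under the stated condition.
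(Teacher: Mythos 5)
Your proposal matches the paper's own proof: the paper likewise applies the coordinate transformation $\theta_t = \mathbb{U}^\top \phi_t$, $\tilde{\nu}_t = \mathbb{U}^\top \psi_t$, observes that the resulting dynamics are identical to those in Theorem~\ref{theorem1_covariance} with the sub-iteration index replaced by $t$, and concludes by invoking the static nature of \eqref{problem2} so that the same stability analysis and $\alpha_\nu$-bound carry over verbatim. Your added care in verifying that $\omega^\delta$, $\tilde{\mathbb{L}}$, and $\alpha_\nu$ carry no genuine $t$-dependence is exactly the justification the paper leaves implicit, so the argument is correct and essentially the same.
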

\begin{proof}
We begin by applying the same coordinate transformation as in Theorem~\ref{theorem1_covariance}:
\begin{equation}
    \label{transformation_coordinate2}
    \theta_{t} = \mathbb{U}^\top \phi_{t}, \quad \tilde{\nu}_{t} = \mathbb{U}^\top \psi_{t}.
\end{equation}
Substituting \eqref{transformation_coordinate2} into \eqref{ADMM_covariance_update_noiteration}, the transformed dynamics are given by:
\begin{equation}
\begin{split}
    \psi_{t+1} &= \psi_{t} + \alpha_{\nu} (\Lambda \otimes I_{n_{\text{cov}}}) \phi_{t}, \\
    \phi_{t+1} &= (I - 2\alpha_{\nu} (\Lambda \otimes I_{n_{\text{cov}}})) \phi_{t} + \alpha_{\nu} (\Lambda \otimes I_{n_{\text{cov}}}) \phi_{t-1}.
\end{split}
\end{equation}
The above dynamics are identical to those analyzed in Theorem~\ref{theorem1_covariance}, except that the update law in \eqref{ADMM_covariance_update_noiteration} does not include sub-iterations. Since the distributed optimization problem is static, the stability and convergence analysis in Theorem~\ref{theorem1_covariance} directly apply. Therefore, the details are omitted here for brevity.
\end{proof}
Next, we show the boundedness of the posterior covariance matrix $P_{i,t}$ and the prior covariance matrix $P_{i,t|t-1}$. 
\begin{lemma}
    Consider the ADMM algorithm for DKF, as described in Algorithm 1. Assume that Assumptions 1 and 2 hold. If $0 < \alpha_{\nu} < \frac{2}{3 \lambda_{\text{max}}(\mathcal L)}$, then there exist positive symmetric matrices $\bar{P}< \infty$ and $\underbar{P}> 0$ such that $\underbar{P}<P_{i,t}<\bar{P}< $ and $\underbar{P}<P_{i,t|t-1}<\bar{P}$.
\end{lemma}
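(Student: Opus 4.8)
The plan is to reduce the two coupled updates to a single time-varying Riccati-type recursion and then bound it from below by elementary algebra and from above using observability, the latter being the substantive part. First I would fix the object of study: combining the prediction step $P_{i,t\mid t-1} = F P_{i,t-1\mid t-1}F^{\top} + Q$ with the information-form correction $P_{i,t} = P_{i,t\mid t} = (P_{i,t\mid t-1}^{-1} + \Theta_{i,t})^{-1}$ from Algorithm~\ref{Algo_ADMM2} yields a recursion whose only data-dependent input is the injected information $\Theta_{i,t}$. By Theorem~\ref{theorem1_covariance}, $\Theta_{i,t}\to H^{\top}\bar R^{-1}H$, so $\{\Theta_{i,t}\}$ is symmetric, positive semidefinite, and uniformly bounded, say $0\preceq\Theta_{i,t}\preceq\bar\theta I$ for all $t$. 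A routine induction then shows every $P_{i,t}$ and $P_{i,t\mid t-1}$ is symmetric positive semidefinite, and from $P_{i,t}^{-1}=P_{i,t\mid t-1}^{-1}+\Theta_{i,t}\succeq P_{i,t\mid t-1}^{-1}$ we also get the handy ordering $P_{i,t}\preceq P_{i,t\mid t-1}$.

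For the lower bound I would argue directly. Since $Q\succ 0$ and $P_{i,t-1\mid t-1}\succeq 0$, the prediction step gives $P_{i,t\mid t-1}=FP_{i,t-1\mid t-1}F^{\top}+Q\succeq Q\succ 0$, so any $\underline P\prec Q$ works on the prior side and, equivalently, $P_{i,t\mid t-1}^{-1}\preceq Q^{-1}$. Substituting into the correction step, $P_{i,t}^{-1}=P_{i,t\mid t-1}^{-1}+\Theta_{i,t}\preceq Q^{-1}+\bar\theta I$, whence $P_{i,t}\succeq (Q^{-1}+\bar\theta I)^{-1}\succ 0$. Taking $\underline P$ strictly below both $Q$ and $(Q^{-1}+\bar\theta I)^{-1}$ in the Loewner order produces the uniform positive-definite lower bound for both sequences.

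The upper bound is the main obstacle, because no ceiling on the covariance can follow from $\Theta_{i,t}$ alone: one needs enough information to accumulate over time to counteract any unstable modes of $F$, which is exactly what Assumption~1 supplies. The plan is to compare the local recursion with the centralized information filter \eqref{cov_update_single}, whose iterates $\Omega_{t\mid t-1}$ converge to $P^{*-1}\succ 0$ under observability (cited from \cite{kailath2000linear}) and are therefore uniformly bounded above by some $\Pi\succ 0$. Since $\Theta_{i,t}\to H^{\top}\bar R^{-1}H$, the local recursion is an asymptotically vanishing perturbation of the centralized one, and I would make this rigorous by a windowed-observability argument: for $t$ beyond a finite $T_0$ the $n$-step accumulated information $\sum_{k=0}^{n-1}(F^{k})^{\top}\Theta_{i,t+k}F^{k}$ lies within $\tfrac{1}{2}$ of the observability Gramian $\sum_{k=0}^{n-1}(F^{k})^{\top}H^{\top}\bar R^{-1}H\,F^{k}\succ 0$ and is hence itself positive definite, which feeds the standard sub-optimal-predictor comparison that majorizes $P_{i,t\mid t-1}$ by a finite $\Pi'\succ 0$. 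Alternatively one may invoke continuity together with the contraction property of the Riccati operator for the observable/controllable data $(F,H,Q^{1/2})$ — noting that $Q\succ 0$ makes $(F,Q^{1/2})$ controllable — so that the perturbed iterates stay in a bounded neighborhood of the centralized trajectory.

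Finally I would assemble the bound. The tail $t\ge T_0$ is controlled by $\Pi'$, while the finite transient $t<T_0$ is bounded because it consists of finitely many continuous operations on bounded inputs; setting $\bar P$ strictly above the larger of these two and using $P_{i,t}\preceq P_{i,t\mid t-1}\preceq\bar P$ yields the common upper bound for both sequences, which together with $\underline P$ from the second step completes the proof. I expect the delicate point to be the robustness step of the third paragraph, namely certifying that the asymptotically small discrepancy $\Theta_{i,t}-H^{\top}\bar R^{-1}H$ cannot be amplified by the iterated Riccati map into unbounded growth; the windowed-observability estimate is the safest route since it sidesteps any explicit contraction-modulus computation.
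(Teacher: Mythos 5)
The paper does not actually prove this lemma --- it is dispatched with a citation to Lemma~8 of \cite{ryu2023consensus} --- and your skeleton (lower bound from $Q \succ 0$, upper bound from windowed observability of the accumulated information $\Theta_{i,t}$, finite transient handled separately) is exactly the standard route that the cited result follows. So in outline you are reconstructing the intended argument rather than diverging from it.

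There is, however, one genuine gap, and it sits earlier than where you located the delicacy. You claim that Theorem~\ref{theorem1_covariance} yields $0 \preceq \Theta_{i,t} \preceq \bar{\theta} I$ for all $t$. Convergence of $\Theta_{i,t}$ to $H^{\top}\bar{R}^{-1}H$ gives uniform boundedness, but \emph{not} positive semidefiniteness along the trajectory: the iterates of \eqref{ADMM_covariance_update_noiteration} are not convex combinations of PSD matrices --- the dual variable is subtracted, and the neighbor-difference term $\sum_{j} a_{ij}(\Theta_{i,t}-\Theta_{j,t})$ is sign-indefinite --- so $\mathrm{vech}^{-1}(\theta_{i,t})$ can be indefinite during the transient (e.g., when $H_i^{\top}R_i^{-1}H_i$ is rank one). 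Even in the tail, the limit $H^{\top}\bar{R}^{-1}H$ need not be positive definite under Assumption~1 (observability of $(F,H)$ does not require the stacked $H$ to have full column rank), so convergence only buys $\Theta_{i,t} \succeq -\epsilon I$. Nearly every subsequent step leans on the false PSD claim: the induction giving $P_{i,t-1\mid t-1}\succeq 0$, the ordering $P_{i,t}\preceq P_{i,t\mid t-1}$, the inversion yielding $P_{i,t}\succeq (Q^{-1}+\bar{\theta} I)^{-1}$, and --- most seriously --- well-posedness itself: if $\Theta_{i,t}$ has an eigenvalue below $-1/\lambda_{\max}(P_{i,t\mid t-1})$, then $P_{i,t}=(P_{i,t\mid t-1}^{-1}+\Theta_{i,t})^{-1}$ is indefinite or undefined, and your disposal of the transient as ``finitely many continuous operations on bounded inputs'' collapses, since inversion is neither defined nor continuous at singular arguments. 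Note the circularity: excluding this requires an a priori bound $\lambda_{\max}(P_{i,t\mid t-1})\leq \bar{p}$ so that a tail tolerance $\epsilon < 1/\bar{p}$ suffices, but $\bar{p}$ is precisely what the lemma asserts. The repair is a joint induction: use the geometric rate of the Schur-stable consensus dynamics to pick $T_0$ with $\lVert \Theta_{i,t}-H^{\top}\bar{R}^{-1}H\rVert \leq \epsilon$ for $t\geq T_0$, choose $\epsilon$ and the candidate $\bar{P}$ simultaneously, and verify well-posedness of the finitely many transient steps explicitly (exploiting the initialization $\theta_{i,0}=N\omega_i^{\delta}$, which is PSD). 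Your windowed-observability step for the upper bound is sound as sketched; it is the blanket PSD assumption that needs the work.
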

\begin{proof}
    The proof follows the same path as given in Lemma 8 in \cite{ryu2023consensus}, thus omitted.
\end{proof}
Next, we show that the local covariance matrices $P_{i,t|t-1}$ for all $i \in \{1,2, \ldots, N\}$ generated by Algorithm 1 converge to $P^*$, which is a unique solution of the following algebraic Riccati equation:
\begin{equation}
\label{riccati_algebraic}
P = F \{P - P H^{\top}(H^{\top}PH+ \bar{R})^{-1}HP\}F^{\top}+Q.
\end{equation}
\begin{theorem}
    Consider Algorithm 1 and let Assumptions 1 and 2 hold. Let $P^*$ be the unique solution of \eqref{riccati_algebraic}. Let $0 < \alpha_{\nu} < \frac{2}{3 \lambda_{max}(\mathcal L)}$. Then the covariance matrices $P_{i,t|t-1}$ for all $i \in \{1,2, \ldots, N\}$ generated by Algorithm 1 converge to $P^*$.
\end{theorem}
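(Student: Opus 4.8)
The plan is to recognize the node-$i$ covariance recursion in Algorithm~1 as a \emph{perturbed} Riccati iteration whose perturbation vanishes by Theorem~\ref{theorem1_covariance}, and then to conclude convergence to $P^*$ by combining the exponential stability (contraction property) of the nominal Riccati map with a standard perturbed-contraction argument. First I would eliminate the posterior covariance and write everything in terms of the prior covariance: combining $P_{i,t\mid t}=(P_{i,t\mid t-1}^{-1}+\Theta_{i,t})^{-1}$ with $P_{i,t+1\mid t}=FP_{i,t\mid t}F^\top+Q$ yields the single recursion
\begin{equation*}
P_{i,t+1\mid t}=F\bigl(P_{i,t\mid t-1}^{-1}+\Theta_{i,t}\bigr)^{-1}F^\top+Q =: \mathcal R_{\Theta_{i,t}}(P_{i,t\mid t-1}),
\end{equation*}
where, for a fixed symmetric positive semidefinite information-rate matrix $\mathcal I$, I define the Riccati operator $\mathcal R_{\mathcal I}(P)=F(P^{-1}+\mathcal I)^{-1}F^\top+Q$. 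By Theorem~\ref{theorem1_covariance} the driving matrix converges, $\Theta_{i,t}\to \mathcal I^*:=H^\top\bar R^{-1}H$, so the iteration is asymptotically time-invariant with limiting map $\mathcal R_{\mathcal I^*}$, whose unique positive-definite fixed point is the stabilizing solution $P^*$ of \eqref{riccati_algebraic} (equivalently $(P^{-1}+\mathcal I^*)^{-1}=P-PH^\top(HPH^\top+\bar R)^{-1}HP$ by the matrix inversion lemma, recovering exactly \eqref{cov_update_single}).

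Next I would invoke the preceding boundedness lemma: every iterate satisfies $\underline P\preceq P_{i,t\mid t-1}\preceq\bar P$, so all iterates together with $P^*$ lie in a fixed compact order interval $\mathcal K$ of the positive-definite cone. On $\mathcal K$ the joint map $(\mathcal I,P)\mapsto\mathcal R_{\mathcal I}(P)$ is continuously differentiable, hence Lipschitz in $\mathcal I$ uniformly in $P$, giving
\begin{equation*}
\bigl\lVert \mathcal R_{\Theta_{i,t}}(P_{i,t\mid t-1})-\mathcal R_{\mathcal I^*}(P_{i,t\mid t-1})\bigr\rVert_F \le L\,\lVert \Theta_{i,t}-\mathcal I^*\rVert_F =: \eta_{t}\to 0 .
\end{equation*}
Thus the node-$i$ recursion reads $P_{i,t+1\mid t}=\mathcal R_{\mathcal I^*}(P_{i,t\mid t-1})+\Delta_{i,t}$ with $\lVert\Delta_{i,t}\rVert_F\le\eta_t$ vanishing.

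The decisive step is to show that the nominal map $\mathcal R_{\mathcal I^*}$ is a contraction on $\mathcal K$, and I expect this to be the main obstacle, since it is the only place where structure beyond mere boundedness is required. Here I would appeal to classical Riccati theory \cite{kailath2000linear}: because $(F,H)$ is observable (Assumption~1) and $(F,Q^{1/2})$ is controllable (automatic as $Q\succ0$), the Riccati difference equation has a globally exponentially stable equilibrium $P^*$; equivalently $\mathcal R_{\mathcal I^*}$ is a contraction with factor $0<\gamma<1$ in a suitable metric on the positive-definite cone (for instance the Thompson metric, which is equivalent to the Frobenius distance on the compact set $\mathcal K$). Concretely I would either cite the standard contraction estimate for the Riccati map or reproduce the linearization argument establishing $\rho\bigl(D\mathcal R_{\mathcal I^*}(P^*)\bigr)<1$, taking care that observability of $(F,H)$ is used only through the \emph{global} information rate $\mathcal I^*$, for which it is exactly Assumption~1.

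Finally I would close with a perturbed-contraction lemma. Writing $d_t:=\mathrm{dist}(P_{i,t\mid t-1},P^*)$ in the contraction metric, the two previous steps yield $d_{t+1}\le\gamma d_t+\eta_t$ with $0<\gamma<1$ and $\eta_t\to0$; iterating and splitting the sum into a geometrically decaying head and a tail controlled by $\sup_{k\ge s}\eta_k$ shows $\limsup_t d_t=0$, hence $P_{i,t\mid t-1}\to P^*$. Since the network has finitely many nodes and the constants $L,\gamma,\underline P,\bar P$ are node-independent, this convergence holds simultaneously for all $i\in\{1,\dots,N\}$, completing the proof.
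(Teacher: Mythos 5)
The paper gives no argument for this theorem at all: its ``proof'' is a one-line deferral to Theorem~9 of \cite{ryu2023consensus}. Your reconstruction---recasting the covariance recursion of Algorithm~1 as a driven Riccati iteration $P_{i,t+1\mid t}=F(P_{i,t\mid t-1}^{-1}+\Theta_{i,t})^{-1}F^\top+Q$, using Theorem~\ref{theorem1_covariance} to make the perturbation vanish, using the boundedness lemma (Lemma~3) to confine the iterates and $P^*$ to a compact order interval $\mathcal K$, and closing with a uniform contraction of the nominal Riccati map plus a perturbed-contraction lemma---is the standard route for results of this type and is consistent in spirit with the cited proof, with the added merit of being self-contained. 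Your identification of the fixed point via the matrix inversion lemma correctly recovers \eqref{riccati_algebraic}, and your observation that the Frobenius and contraction metrics are equivalent on $\mathcal K$ is the right patch for mixing the two norms in $d_{t+1}\le\gamma d_t+\eta_t$.

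Two steps need tightening. First, the word ``equivalently'' in your contraction step overclaims: global exponential stability of $P^*$ for the nominal iteration does not by itself yield the uniform one-step inequality on all of $\mathcal K$, and your fallback---linearization showing $\rho\bigl(D\mathcal R_{\mathcal I^*}(P^*)\bigr)<1$---gives only a \emph{local} contraction near $P^*$, which cannot handle iterates initialized anywhere in $\mathcal K$ unless you separately show they enter that neighborhood despite the perturbation. The clean version is your first option: the Riccati map is a strict contraction in the Thompson/Riemannian metric on the order interval $[\underline P,\bar P]$, and here this follows in one step from $Q\succ0$, since matrix inversion is an isometry, translation by the positive semidefinite $\mathcal I^*$ and the congruence are non-expansive, and the final translation $X\mapsto X+Q$ contracts strictly on sets bounded above, with a factor of the form $\beta/(\beta+\lambda_{\min}(Q))<1$ where $\beta$ depends on $\bar P$. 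Observability (Assumption~1) then enters only through the bounds of Lemma~3 and the existence and uniqueness of the positive definite solution of \eqref{riccati_algebraic} \cite{kailath2000linear}, exactly as you intend. Second, $\Theta_{i,t}=\mathrm{vech}^{-1}(\theta_{i,t})$ is not guaranteed positive semidefinite during the transient, so well-posedness of $(P_{i,t\mid t-1}^{-1}+\Theta_{i,t})^{-1}$ along the trajectory does not follow from Theorem~\ref{theorem1_covariance} alone; make explicit that you inherit this from Lemma~3, or restrict the Lipschitz-perturbation estimate to $t$ large enough that $\Theta_{i,t}$ lies in a neighborhood of $H^\top\bar R^{-1}H$, which suffices since only the asymptotics matter.
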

\begin{proof}
       The proof follows the same path as that of Theorem 9 in \cite{ryu2023consensus}, thus omitted.
\end{proof}
Next, we establish the stability of the sequence \( \{\xi_{t,l}\} \) generated by Algorithm 1. To this end, we reformulate the dynamics of \( \xi_{t,l} \) independently of \( \tilde{\lambda}_{t,l} \). This reformulation allows us to use stability tools from dynamical systems theory, providing bounds on the design parameters \( \alpha_{\lambda} \) and \( \mu \). Specifically, we define the error terms as:
\begin{equation}
    \label{error1}
    e^{\xi}_{t,l} = \xi^*_t - \xi_{t,l}, \quad e^{\tilde{\lambda}}_{t,l} = \tilde{\lambda}^*_t - \tilde{\lambda}_{t,l}.
\end{equation}
The update step for \( \xi_{t,l} \) in \eqref{ADMM_aux_Kalman} can be expressed as a second-order discrete-time dynamical system:
\begin{equation}
    \label{correction_Alg2_2}
    \xi_{t,l+1} = \xi_{t,l} + \mu \mathbb{L}\xi_{t,l-1} - (\alpha_{\lambda} + \mu) \mathbb{L}\xi_{t,l}.
\end{equation}
Substituting the error definitions from \eqref{error1} into \eqref{ADMM_aux_Kalman} and \eqref{correction_Alg2_2}, the error dynamics are given by:
\begin{equation}
    \label{correction_Alg2_3}
    \begin{split}
        e^{\tilde{\lambda}}_{t,l+1} &= e^{\tilde{\lambda}}_{t,l} + \alpha_{\lambda} K^{-1}_t \mathbb{L} e^{\xi}_{t,l}, \\
        e^{\xi}_{t,l+1} &= e^{\xi}_{t,l} - (\alpha_{\lambda} + \mu) \mathbb{L} e^{\xi}_{t,l} + \mu \mathbb{L} e^{\xi}_{t,l-1}.
    \end{split}
\end{equation}
The next result establishes conditions on the design parameters \( \alpha_{\lambda} \) and \( \mu \) in terms of \( \lambda_{\text{max}}(\mathcal{L}) \). In contrast to \cite{ryu2023consensus}, where the parameters are related to \( \lambda^2_{\text{max}}(\mathcal{L}) \), the bounds provided here mitigate the slowdown of the consensus process, reducing the need for a larger \( L \) to achieve improved performance.

\begin{theorem}
\label{theorem5}
Consider the discrete-time dynamical system given in \eqref{correction_Alg2_3} with \( \alpha_{\lambda} > 0 \) and \( \mu > 0 \) such that \( \alpha_{\lambda} + 2\mu < \frac{2}{\lambda_{\text{max}}(\mathcal{L})} \). Let Assumptions 1 and 2 hold. Then the state vector \( e^{\xi}_{t,l} \) converges to zero asymptotically for sufficiently large \( L \), and \( e^{\tilde{\lambda}}_{t,l} \) remains bounded for any \( l \). Consequently, \( \xi_{t,l} \) in \eqref{ADMM_aux_Kalman} asymptotically converges to \( \xi^*_t \).
\end{theorem}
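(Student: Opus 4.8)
The plan is to exploit the fact that the $e^{\xi}$-recursion in \eqref{correction_Alg2_3} is \emph{autonomous}: its second line does not depend on $e^{\tilde{\lambda}}_{t,l}$, so I would first establish the exponential decay of $e^{\xi}_{t,l}$ by itself and only afterwards certify boundedness of the dual error. First I would recast the second-order recursion for $e^{\xi}_{t,l}$ as a first-order system by stacking $\hat{e}_{t,l} = [(e^{\xi}_{t,l})^\top,\,(e^{\xi}_{t,l-1})^\top]^\top$, which gives $\hat{e}_{t,l+1} = \mathcal{M}\hat{e}_{t,l}$ with
\[
\mathcal{M} = \begin{bmatrix}\mathbb{I} - (\alpha_{\lambda}+\mu)\mathbb{L} & \mu\mathbb{L}\\ \mathbb{I} & \mathbb{O}\end{bmatrix}.
\]
Then, exactly as in the proof of Theorem~\ref{theorem1_covariance}, I would apply the orthogonal transformation $\mathbb{U} = U\otimes I_n$ built from the eigenbasis of $\mathcal{L}$ to block-diagonalize $\mathbb{L}$, decoupling the dynamics into $N$ independent modes indexed by the eigenvalues $0 = \lambda_1 < \lambda_2 \le \cdots \le \lambda_N = \lambda_{\max}(\mathcal{L})$.

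For each nonzero eigenvalue $\lambda_i$, the corresponding mode evolves under the $2\times 2$ companion matrix
\[
M_i = \begin{bmatrix}1 - (\alpha_{\lambda}+\mu)\lambda_i & \mu\lambda_i\\ 1 & 0\end{bmatrix}, \qquad p_i(z) = z^2 - \bigl(1-(\alpha_{\lambda}+\mu)\lambda_i\bigr)z - \mu\lambda_i.
\]
The core of the argument is to show $\rho(M_i) < 1$ for every $i \ge 2$, which I would do with the Jury (Schur--Cohn) test applied to the quadratic $p_i$. The three conditions reduce to $p_i(1) = \alpha_{\lambda}\lambda_i > 0$ (automatic), $|-\mu\lambda_i| < 1$, and $p_i(-1) = 2 - (\alpha_{\lambda}+2\mu)\lambda_i > 0$. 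The last inequality, evaluated at the worst case $\lambda_i = \lambda_{\max}(\mathcal{L})$, is precisely the hypothesis $\alpha_{\lambda}+2\mu < 2/\lambda_{\max}(\mathcal{L})$; moreover the same hypothesis forces $2\mu\lambda_i \le (\alpha_{\lambda}+2\mu)\lambda_i < 2$, so $\mu\lambda_i < 1$ holds automatically. Hence every range-space mode is Schur stable, and there exist $C<\infty$ and $0<\rho<1$ with $\|\bar{e}^{\xi}_{t,l}\| \le C\rho^l$, mirroring \eqref{phi_decay}.

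It remains to treat the nullspace mode and the dual error. Since $(u_N^\top\otimes I_n)\mathbb{L} = 0$, the consensus-direction component of $e^{\xi}_{t,l}$ is invariant under \eqref{correction_Alg2_3}; using the saddle-point parametrization $\xi^*_t = (1_N\otimes I_n)\xi^{\dag}_t$ from Lemma~\ref{lemma1} together with the initialization in Algorithm~\ref{Algo_ADMM2}, I would argue that this invariant component of the error vanishes, so $e^{\xi}_{t,l}\to 0$ and therefore $\xi_{t,L}\to\xi^*_t$ for $L$ sufficiently large. For the dual error, telescoping the first line of \eqref{correction_Alg2_3} gives $e^{\tilde{\lambda}}_{t,l} = e^{\tilde{\lambda}}_{t,0} + \alpha_{\lambda} K_t^{-1}\mathbb{L}\sum_{k=0}^{l-1}e^{\xi}_{t,k}$; because $\mathbb{L}$ annihilates the constant nullspace part while the range part decays geometrically, the partial sums converge and $\|e^{\tilde{\lambda}}_{t,l}\|$ stays bounded for all $l$, in direct analogy with the bound \eqref{psi_bound} on $\bar{\psi}_{t,l}$.

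The main obstacle I anticipate is twofold. The uniform spectral-radius bound over all modes $\lambda_i\in(0,\lambda_{\max}(\mathcal{L})]$ is the key quantitative step, but it is clean once the Jury conditions are written out and one observes that only $p_i(-1)>0$ is binding and is monotone in $\lambda_i$; this is exactly what produces a bound linear in $\lambda_{\max}(\mathcal{L})$ rather than in $\lambda_{\max}^2(\mathcal{L})$, as claimed in the paper's comparison with \cite{ryu2023consensus}. The genuinely delicate point is reconciling the invariant nullspace component with the target $\xi^{\dag}_t$: since the $e^{\xi}$-recursion conserves the consensus-direction average, one must verify that the initialization and the structure of $\xi^*_t$ make this conserved quantity coincide with $(u_N^\top\otimes I_n)\xi^*_t$ (otherwise only the consensus error, i.e.\ the range-space part of $e^{\xi}$, can be shown to vanish). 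I would address this by a direct computation of the conserved quantity and an explicit comparison with $\xi^*_t$.
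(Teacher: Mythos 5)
Your core stability argument is essentially the paper's own, executed more carefully. The paper likewise eliminates the dual variable, obtains the second-order recursion \eqref{correction_Alg2_2}, passes to the eigenbasis of $\mathcal{L}$ (it transforms by $\bar{\mathbb{U}}^\top = \bar{U}^\top \otimes I_n$, i.e.\ works directly in range-space coordinates, rather than stacking first and diagonalizing after), and reduces everything to the $2\times 2$ companion matrices $\bar{M}^i$. Where the paper merely writes down the eigenvalue formula and \emph{asserts} that $|\lambda(\bar{M}^i)|<1$ under $\alpha_\lambda>0$, $\mu>0$, $\alpha_\lambda+2\mu<2/\lambda_{\max}(\mathcal{L})$, your Jury computation actually proves it: $p_i(1)=\alpha_\lambda\lambda_i>0$ automatically, $p_i(-1)=2-(\alpha_\lambda+2\mu)\lambda_i>0$ is the binding condition and is monotone in $\lambda_i$, and $\mu\lambda_i<1$ follows from $2\mu\lambda_i\le(\alpha_\lambda+2\mu)\lambda_i<2$. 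Your telescoped treatment of $e^{\tilde{\lambda}}_{t,l}$ is also sound and is in fact more careful than the paper's bound \eqref{boundednes_lambdatilde}, which invokes $\|e^{\xi}_{t,l}\|\le C\rho^l$; your version only needs $\|\mathbb{L}e^{\xi}_{t,l}\|\le C\rho^l$, which is all that is true in general.

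The gap is exactly where you feared it, and you left that step unexecuted. Carrying out your proposed ``direct computation of the conserved quantity'': since $(1_N^\top\otimes I_n)\mathbb{L}=0$, the recursion \eqref{correction_Alg2_2} conserves $(1_N^\top\otimes I_n)\xi_{t,l}$ for $l\ge 1$, and its value is fixed at the first sub-iteration as $\sum_{i} K_{i,t}\bigl(H_i^\top R_i^{-1}y_{i,t}+\frac{1}{N}P_{i,t|t-1}^{-1}\hat{x}_{i,t|t-1}\bigr)$, whereas the target satisfies $(1_N^\top\otimes I_n)\xi^*_t = N\xi^{\dag}_t = N\bigl(\sum_i K_{i,t}^{-1}\bigr)^{-1}\sum_i\bigl(H_i^\top R_i^{-1}y_{i,t}+\frac{1}{N}P_{i,t|t-1}^{-1}\hat{x}_{i,t|t-1}\bigr)$. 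These coincide when all $K_{i,t}$ are identical, but not for heterogeneous nodes (a two-node scalar example with $K_{1,t}\neq K_{2,t}$ already separates them). So the invariant nullspace component of $e^{\xi}_{t,l}$ is generically nonzero, and your plan can only deliver $\bar{\mathbb{U}}^\top e^{\xi}_{t,l}\to 0$ (consensus on a common value), not $e^{\xi}_{t,l}\to 0$ and $\xi_{t,l}\to\xi^*_t$ as the theorem claims. You should know, however, that this is not a defect of your proposal relative to the paper: the paper's proof has the same hole, since it defines $e^{\xi_T}_{t,l}=\bar{\mathbb{U}}^\top e^{\xi}_{t,l}$ and later reconstructs $e^{\xi}_{t,l}=\bar{\mathbb{U}}e^{\xi_T}_{t,l}$, which silently presumes the nullspace component is zero. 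To close the argument honestly one must either restrict the claim to the disagreement (range-space) error, add a hypothesis forcing the conserved average to match $N\xi^{\dag}_t$ (e.g.\ homogeneous $K_{i,t}$), or modify the iteration or initialization so the nullspace error vanishes at $l=1$.
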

\begin{proof}
To analyze the stability of the equilibrium of \eqref{correction_Alg2_3}, we introduce the following coordinate transformation:
\begin{equation}
    \label{th5_trans}
    {e}^{\xi_T}_{t,l} = \mathbb{\bar{U}}^\top {e}^{\xi}_{t,l},
\end{equation}
where the subscript \( T \) denotes the transformed variable, and \( \mathbb{\bar{U}} = \bar{U} \otimes I_n \). Substituting \eqref{correction_Alg2_3} into \eqref{th5_trans}, the discrete-time dynamical system can be written in the transformed coordinates as:
\begin{equation}
    \label{autonomous_theorem3}
    {e}^{\xi_T}_{t,l+1} = e^{\xi_T}_{t,l} - (\alpha_{\lambda} + \mu) \mathbb{\tilde{\Lambda}} e^{\xi_T}_{t,l} + \mu \mathbb{\tilde{\Lambda}} e^{\xi_T}_{t,l-1},
\end{equation}
where \( \mathbb{\tilde{\Lambda}} =  \tilde{\Lambda} \otimes I_n \) with
\( \tilde{\Lambda} = \mathrm{diag}(\lambda_2(\mathcal{L}), \ldots, \lambda_N(\mathcal{L})) \).
Letting \( \bar{e}^{\xi_T}_{t,l} = [{e}^{{\xi_T}^\top}_{t,l}, {e}^{{\xi_T}^\top}_{t,l-1}]^\top \), we rewrite \eqref{autonomous_theorem3} in compact form:
\begin{equation}
    \label{correction_Alg2_6}
    \bar{e}^{\xi_T}_{t,l+1} = \bar{M} \bar{e}^{\xi_T}_{t,l},
\end{equation}
where
\[
\bar{M} = \begin{bmatrix}
    \mathbb{I} - (\alpha_{\lambda} + \mu) \mathbb{\tilde{\Lambda}} & \mu \mathbb{\tilde{\Lambda}} \\
    \mathbb{I} & \mathbb{O}
\end{bmatrix}.
\]
To show the stability of \eqref{correction_Alg2_6}, we ensure that \( \bar{M} \) is Schur stable. The eigenvalues of \( \bar{M} \) are determined by the eigenvalues of the following \( 2 \times 2 \) matrix for each eigenvalue \( \lambda_i(\tilde{\Lambda}) \) of \( \tilde{\Lambda} \):
\[
\bar{M}^i = \begin{bmatrix}
    \bar{\alpha}_{\lambda \mu} & \mu \lambda_i \\
    1 & 0
\end{bmatrix}, \,\, \text{where} \,\, \bar{\alpha}_{\lambda \mu} = 1 - (\alpha_{\lambda} + \mu) \lambda_i.
\]
The eigenvalues of \( \bar{M}^i \) are given by:
        
\[
\lambda(\bar{M}^i) = \frac{1}{2} \left(\bar{\alpha}_{\lambda \mu} \pm \sqrt{\bar{\alpha}_{\lambda \mu}^2 + 4\mu \lambda_i} \right).
\]
For \( \bar{M}^i \) to be Schur stable, we require \( |\lambda(\bar{M}^i)| < 1 \). This condition is satisfied if \( \alpha_{\lambda} > 0 \), \( \mu > 0 \), and \( \alpha_{\lambda} + 2\mu < \frac{2}{\lambda_{\text{max}}(\mathcal{L})} \) for all \( i \in \{1, \ldots, N - 1\} \). Under these conditions, \( \bar{M} \) is Schur stable, and \( \bar{e}^{\xi_T}_{t,l} \to 0 \) as \( l \to \infty \). Consequently, \( e^{\xi}_{t,l} \to 0 \) as \( l \to \infty \) since \( {e}^{\xi}_{t,l} = \mathbb{\bar{U}} {e}^{\xi_T}_{t,l} \).
To establish the boundedness of \( e^{\tilde{\lambda}}_{t,l} \), consider its update equation in \eqref{correction_Alg2_3}:
\[
e^{\tilde{\lambda}}_{t,l+1} = e^{\tilde{\lambda}}_{t,l} + \alpha_{\lambda} K^{-1}_t \mathbb{L} e^{\xi}_{t,l}.
\]
Since \( e^{\xi}_{t,l} \to 0 \) as \( l \to \infty \), \( e^{\tilde{\lambda}}_{t,l} \) satisfies the following bound:
\begin{equation}
    \label{boundednes_lambdatilde}
    \| e^{\tilde{\lambda}}_{t,l} \| \leq \| e^{\tilde{\lambda}}_{t,0} \| + \alpha_{\lambda} \| K^{-1}_t \| \frac{C}{1 - \rho},
\end{equation}
where \( C < \infty \) and \( 0 < \rho < 1 \) are constants determined by the exponential decay of \( e^{\xi}_{t,l} \). Thus, \( e^{\tilde{\lambda}}_{t,l} \) remains bounded.
Under the conditions \( \alpha_{\lambda} > 0 \), \( \mu > 0 \), and \( \alpha_{\lambda} + 2\mu < \frac{2}{\lambda_{\text{max}}(\mathcal{L})} \), \( e^{\xi}_{t,l} \to 0 \) asymptotically, and \( e^{\tilde{\lambda}}_{t,l} \) remains bounded. Consequently, \( \xi_{t,l} \to \xi^*_t \) as \( l \to \infty \), completing the proof.
\end{proof}
Next, we show the stability of the state estimates as $t$ approach to infinity.
\begin{theorem}
\label{main_theorem_algo1}
    Let the Assumptions 1-2 hold. Then the sequence generated by Algorithm 1 satisfies the following:
    \begin{equation}
        \lim_{t \to \infty} \mathbb{E}\{x_t - \xi_{i,t}\}=0.
    \end{equation}
\end{theorem}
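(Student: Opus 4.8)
The plan is to show that once consensus has been reached and the local covariances have settled, each node's estimate $\xi_{i,t}$ coincides with the centralized Kalman filter posterior, whose mean error obeys a Schur-stable closed loop. First I would invoke Theorem~\ref{theorem5}: under $\alpha_\lambda>0$, $\mu>0$ with $\alpha_\lambda+2\mu<2/\lambda_{\max}(\mathcal L)$ and $L$ large, the sub-iterates satisfy $\xi_{i,t}=\xi_{i,t,L}\to\xi^\dag_t$ for every $i$, so that it suffices to analyze $\mathbb{E}\{x_t-\xi^\dag_t\}$. Using the block-diagonal structure of $\bar{\mathbf H}$ and $\bar{\mathbf S}_t$ and the stacking of $\mathbf z_t$, I would expand the parametrization $\xi^\dag_t=\mathcal H_t^{-1}\mathbb{1}_N^\top\bar{\mathbf H}^\top\bar{\mathbf S}_t^{-1}\mathbf z_t$ from Lemma~\ref{lemma1} into the information-filter form
\begin{equation*}
\xi^\dag_t=\mathcal H_t^{-1}\sum_{i=1}^N\Big(H_i^\top R_i^{-1}y_{i,t}+\tfrac1N P_{i,t\mid t-1}^{-1}\hat x_{i,t\mid t-1}\Big),\quad \mathcal H_t=\sum_{i=1}^N\Big(H_i^\top R_i^{-1}H_i+\tfrac1N P_{i,t\mid t-1}^{-1}\Big),
\end{equation*}
which is precisely the centralized measurement-update fusion with an averaged prior information matrix.

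Next I would take expectations, noting that $P_{i,t\mid t-1}$, and hence $\mathcal H_t$, evolve deterministically (independently of the measurements), so the expectation passes through $\mathcal H_t^{-1}$. Substituting $y_{i,t}=H_ix_t+v_{i,t}$ with $\mathbb{E}\{v_{i,t}\}=0$ and $\hat x_{i,t\mid t-1}=F\xi_{i,t-1}=F\xi^\dag_{t-1}$ under consensus, and writing $m_t=\mathbb{E}\{x_t\}$, $\hat m_t=\mathbb{E}\{\xi^\dag_t\}$, yields the deterministic recursion $\hat m_t=\mathcal H_t^{-1}\big[H^\top\bar R^{-1}H\,m_t+(\tfrac1N\sum_i P_{i,t\mid t-1}^{-1})F\hat m_{t-1}\big]$. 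By the preceding convergence theorem for the prior covariance ($P_{i,t\mid t-1}\to P^*$), the coefficient matrices converge: $\mathcal H_t\to P^{*-1}+H^\top\bar R^{-1}H$ and $\tfrac1N\sum_i P_{i,t\mid t-1}^{-1}\to P^{*-1}$. Defining the mean error $\delta_t=m_t-\hat m_t$, using $m_t=Fm_{t-1}$, and applying the matrix inversion lemma to establish $(P^{*-1}+H^\top\bar R^{-1}H)^{-1}P^{*-1}=I-WH$ with $W=P^*H^\top(HP^*H^\top+\bar R)^{-1}$, I would reduce the recursion to
\begin{equation*}
\delta_t=(I-WH)F\,\delta_{t-1}+\epsilon_t,
\end{equation*}
where $\epsilon_t\to0$ absorbs the vanishing mismatch from the still-converging $P_{i,t\mid t-1}$ and any residual consensus error at finite $L$.

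Finally I would establish stability of the error map. The matrix $(I-WH)F$ shares its spectrum with the classical steady-state Kalman predictor closed loop $F(I-WH)$, which, under Assumption~1 ($(F,H)$ observable) together with $Q\succ0$, is Schur stable by standard results \cite{kailath2000linear}. Since the time-varying coefficient $A_t\to(I-WH)F$ is eventually a contraction in a suitable norm while $\epsilon_t\to0$, a standard time-varying linear-systems argument gives $\delta_t\to0$, i.e. $\lim_{t\to\infty}\mathbb{E}\{x_t-\xi^\dag_t\}=0$, and hence $\lim_{t\to\infty}\mathbb{E}\{x_t-\xi_{i,t}\}=0$ for each $i$. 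The main obstacle I anticipate is exactly this last step: converting the asymptotic Schurness of $A_t$ and the vanishing forcing $\epsilon_t$ into genuine convergence of $\delta_t$ requires a uniform (time-independent) contraction bound rather than mere pointwise spectral-radius control, and one must ensure the finite-$L$ consensus residual entering $\epsilon_t$ truly vanishes (or is dominated by the contraction) so that it does not accumulate.
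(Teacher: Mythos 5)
Your reduction of the mean dynamics to $\delta_t=(I-WH)F\delta_{t-1}+\epsilon_t$ is correct as far as it goes (the covariance recursions in Algorithm 1 are indeed measurement-independent, so the expectation does pass through $\mathcal H_t^{-1}$, and the identity $(P^{*-1}+H^\top\bar R^{-1}H)^{-1}P^{*-1}=I-WH$ and the Schur stability of $F(I-WH)$ are standard). But there is a genuine gap exactly where you flag it, and it is the crux of the theorem, not a technicality: you never establish $\epsilon_t\to 0$. Theorem~\ref{theorem5} gives $e^{\xi}_{t,l}\to 0$ as $l\to\infty$ for each \emph{fixed} $t$; it does not give $\mu^{\xi}_t=\mathbb{E}\{\xi^*_t-\xi_{t,L}\}\to 0$ as $t\to\infty$ for \emph{fixed} finite $L$. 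The sub-iteration at time $t$ is warm-started with $\xi_{i,t,0}=\hat x_{i,t\mid t-1}=F\xi_{i,t-1,L}$, so the previous consensus residual is re-injected through $F$, which in the motivating applications (e.g.\ the constant-velocity model used in the simulations) is not Schur. What one actually obtains is a coupled comparison system of the form $\lVert\mu^{\xi}_t\rVert\le C\rho^{L}\bigl(\lVert F\rVert\,\lVert\mu^{\xi}_{t-1}\rVert+c\,\lVert\delta_{t-1}\rVert\bigr)$ together with $\lVert\delta_t\rVert\le\lVert A_t\rVert\,\lVert\delta_{t-1}\rVert+c'\lVert\mu^{\xi}_{t-1}\rVert$ (the contraction factor $C\rho^{L}$ is at least uniform in $t$, because the error recursion \eqref{correction_Alg2_3} for $e^{\xi}_{t,l}$ does not involve $K_t$), and convergence requires a small-gain condition: $L$ large enough that the per-step contraction dominates the loop gain through $F$ and the $\delta$-dynamics. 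Writing ``$\epsilon_t\to 0$ absorbs\dots any residual consensus error'' is circular at this point, since $\epsilon_t$ contains $F\mu^{\xi}_{t-1}$, whose decay is exactly what is in question.

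This is precisely the step the paper resolves by a different device: it never decouples the two errors. Instead of passing to the steady-state Kalman loop, it forms the composite Lyapunov function $V_t=V^{\dag}_t+\gamma V^{\eta}_t$ in \eqref{final_lyap}, where $V^{\dag}_t=\mu^{\dag\top}_t\mathcal H_t\mu^{\dag}_t$ is shown to decrease over a window of length $T$ by a finite-horizon observability argument (minimizing the quadratic cost $J_t$ in the stacked input $\boldsymbol{\mathsf{u}}_{t+1}$ and lower-bounding $J^*_t$ through the full-rank observability matrix $\mathcal O$ of $(F,H)$, using only the uniform bounds $\underline{P}\le P_{i,t}\le\bar P$ rather than the Riccati limit $P_{i,t\mid t-1}\to P^*$), while $V^{\eta}_t$ tracks the consensus error and the cross terms $c_o\lVert\mu^{\dag}_t\rVert\lVert\mu^{\xi}_t\rVert+c_1\lVert\mu^{\xi}_t\rVert^2$ are absorbed by choosing $\gamma\lambda_{\mathbb{R}_{>0}}(\Gamma)c_3>c_2$, yielding \eqref{V_gad_difference3} and \eqref{second_Lyap_final} jointly negative definite in $(\mu^{\dag}_t,\mu^{\xi}_t)$. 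Your route --- Riccati convergence plus Schur stability of $(I-WH)F$ and a time-varying perturbation lemma --- is a legitimate and arguably more classical treatment of the $\mu^{\dag}$ part, and your instinct about needing a uniform contraction bound rather than pointwise spectral-radius control is right; but to complete the proof you must replace the unproven assertion $\epsilon_t\to 0$ with a small-gain or composite-Lyapunov argument of the above type, valid for $L$ sufficiently large.
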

\begin{proof}
    Define $\mu^{\xi}_{t} = \mathbb{E} \{e^{\xi}_{t,l}\} =\mathbb{E} \{ \xi^*_t- \xi_{t,l}\}$, $\mu^{\tilde{\lambda}}_t = \mathbb{E} \{e^{\tilde{\lambda}}_t \}=\mathbb{E} \{ \tilde{\lambda}^*_t- \tilde{\lambda}_{t,l}\}$. Next, we define a function $V^{\dag}_{t} 
    =\mu^{\dag^{\top}}_{t} \mathcal H_t \mu^{\dag}_{t}$, where $\mu^{\dag}_t= \mathbb{E}\{x_t - \xi^{\dag}_t\}$, $\mathcal H_t = \mathbb{1}^{\top}_N \bar{\mathbf{H}}^{\top} \bar{\mathbf{S}}^{-1}_t \bar{\mathbf{H}}\mathbb{1}_N$, and $\xi^{\dag}_t = \mathcal H_t^{-1} \mathbb{1}^{\top}_N \bar{\mathbf{H}}^{\top} \bar{\mathbf{S}}^{-1}_t \mathbf{z}_t$.
    We can write
    \begin{equation}
        \label{lyap1_dag}
        \begin{split}
            V^{\dag}_{t+1}&= -\mu^{\dag^{\top}}_{t+1}(\mathcal H_{t+1} - 2 \mathcal H_{t+1})\mu^{\dag}_{t+1} \\
            & = -\mu^{\dag^{\top}}_{t+1} \mathbb{1}^{\top}_N \bar{{H}}^{\top} \bar{{R}}^{-1} \bar{{H}}\mathbb{1}_N\mu^{\dag}_{t+1}-\\
            &\frac{1}{N}\mu^{\dag^{\top}}_{t+1} \mathbb{1}^{\top}_N  \bar{{P}}^{-1}_{t+1|t}\mathbb{1}_N\mu^{\dag}_{t+1}+
            2\mu^{\dag^{\top}}_{t+1} \mathcal H_{t+1} \mu^{\dag}_{t+1}, 
        \end{split}
    \end{equation}
where $\bar{H}= \blockdiag\{H_1, \cdots, H_N \}$, $\bar{P}_{t+1|t}= \blockdiag\{P_{1,t+1|t}, \cdots, P_{N,t+1|t} \}$, and $\bar{R}= \blockdiag\{R_1, \cdots, R_N \}$.  Let $\mathcal P_{t+1|t}= (\frac{1}{N}\mathbb{1}^{\top}_N \bar{P}^{-1}_{t+1|t}\mathbb{1}_N )^{-1}$. To further expand \eqref{lyap1_dag}, we write the dynamics of $\mu^{\dag}_{t+1}$ in a convenient form. Using the definition of $\mu^{\dag}_{t+1}$, we have 
    \begin{equation*}
        \label{e_dagger2}
        \begin{aligned}
            \mu^{\dag}_{t+1} &= \mathbb{E}\{x_{t+1}-\xi^{\dag}_{t+1}\}\\
            &= \mathbb{E}\{x_{t+1}\}-\mathcal H^{-1}_{t+1}\mathbb{1}^{\top}_N \mathbb{E}\{\bar{\mathbf{H}}^{\top}\bar{\mathbf{S}}^{-1}_{t+1}\mathbf{z}_{t+1}\}\\
            &=\mathcal H^{-1}_{t+1} \mathcal H_{t+1} \mathbb{E}\{x_{t+1}\}-\mathcal H^{-1}_{t+1}\mathbb{1}^{\top}_N \mathbb{E}\{\bar{\mathbf{H}}^{\top}\bar{\mathbf{S}}_{t+1}\mathbf{z}_{t+1}\}\\
            & = \mathcal H^{-1}_{t+1}\mathbb{1}^{\top}_N\left(K^{-1}_{t+1} \mathbb{1}_N\mathbb{E}\{x_{t+1}\}-\mathbb{E}\{\bar{\mathbf{H}}^{\top}\bar{\mathbf{S}}_{t+1}\mathbf{z}_{t+1}\}\right),
        \end{aligned}
    \end{equation*}
    where $K^{-1}_{t+1} = \bar{{H}}^{\top}\bar{R}^{-1}\bar{{H}}  +\frac{1}{N}\bar{P}^{-1}_{t+1|t}$ and $\mathbb{E}\{\bar{\mathbf{H}}^{\top}\bar{\mathbf{S}}^{-1}_{t+1}\mathbf{z}_{t+1}\}= \bar{H}^{\top}\bar{R}^{-1}\bar{H} \mathbb{1}_N \mathbb{E}\{x_{t+1}\}+\frac{1}{N}\bar{P}^{-1}_{t+1|t}\mathbb{E}\{\hat{x}_{t+1|t}\}$. Next using $\hat{x}_{t+1|t} = \bar{F}\xi_{t,L}$, and that $\mathbb{1}_N F \mathbb{E}\{x_t\}-\bar{F}\mathbb{E}\{\xi_{t, L}\}+\mathbb{1}_N F\mathbb{E}\{\xi^{\dag}_t\}-\mathbb{1}_N F\mathbb{E}\{\xi^{\dag}_t\}=\mathbb{1}_N F\mu^{\dag}_t+\bar{F}\mu^{\xi}_{t} $, we have 
    \begin{equation}
    \label{e_dagger3}
        \begin{aligned}
           \mu^{\dag}_{t+1} &= \mathcal H^{-1}_{t+1}\mathbb{1}^{\top}_N (\bar{H}^{\top}\bar{R}^{-1}\bar{H} \mathbb{1}_N \mathbb{E}\{x_{t+1}\}+
           \frac{1}{N}\bar{P}^{-1}_{t+1|t}\mathbb{1}_NF\mathbb{E}\{x_{t}\} \\
           &-\bar{H}^{\top}\bar{R}^{-1}\bar{H} \mathbb{1}_N \mathbb{E}\{x_{t+1}\}-\frac{1}{N}\bar{P}^{-1}_{t+1|t}\bar{F}\mathbb{E}\{\xi_{t,L} \}). \\
           & = E^{\dag}_{t+1} \mu^{\dag}_t +\bar{E}^{\dag}_{t+1} \mu^{\xi}_t,
        \end{aligned}
    \end{equation}
    where $E^{\dag}_{t+1} = \frac{1}{N}\mathcal H^{-1}_{t+1} \mathbb{1}^{\top}_N\bar{P}^{-1}_{t+1|t}\mathbb{1}_N F$ and $\bar{E}^{\dag}_{t+1} = \frac{1}{N}\mathcal H^{-1}_{t+1} \mathbb{1}^{\top}_N\bar{P}^{-1}_{t+1|t}\bar{F}$. Pre-multiplying \eqref{e_dagger3} by $\mathcal P_{t+1|t} \mathcal H_{t+1}$, we get \begin{equation}
    \label{e_dagger}
    \mathcal P_{t+1|t} \mathcal H_{t+1}    \mu^{\dag}_{t+1} = F \mu^{\dag}_{t}+ {G}^{\xi}_{t+1}\mu^{\xi}_{t},
    \end{equation}
    where ${G}^{\xi}_{t+1}=-\mathcal P_{t+1|t}\mathbb{1}^{\top}_N\frac{1}{N}\bar{P}^{-1}_{t+1|t}\bar{F} $. We will use \eqref{e_dagger} to write the last term in \eqref{lyap1_dag}.
   Next we re-write $\mu^{\dag}_{t+1}$ as below:
\begin{equation}
\label{control_form1}
\begin{split}
    \mu^{\dag}_{t+1} &= F \mu^{\dag}_{t}+ (E^{\dag}_{t+1} -F)\mu^{\dag}_{t}+ \bar{E}^{\dag}_{t+1} \mu^{\xi}_t\\
    & = F \mu^{\dag}_{t}+u_{t+1},
\end{split}
\end{equation}
where $u_{t+1}=(E^{\dag}_{t+1} -F)\mu^{\dag}_{t}+ \bar{E}^{\dag}_{t+1} \mu^{\xi}_t$.
Using \eqref{e_dagger}, \eqref{e_dagger3}, and \eqref{control_form1}, \eqref{lyap1_dag} can be written as:
\begin{equation}
    \label{Vk_dagger_equation1}
    \begin{split}
        V^{\dag}_{t+1} &= -\mu^{\dag^{\top}}_{t+1} {H}^{\top}\bar{R}^{-1}{H} \mu^{\dag}_{t+1}-u^{\top}_{t+1} \mathcal P^{-1}_{t+1|t} u_{t+1}\\
        & +\mu^{\dag^{\top}}_{t}F^{\top}\mathcal P^{-1}_{t+1|t}F\mu^{\dag}_{t}+2\mu^{\dag^{\top}}_{t+1}\mathcal P^{-1}_{t+1|t}{G}^{\xi}_{t+1}\mu^{\xi}_t.
    \end{split}
\end{equation}
Recalling that \( \mathcal P^{-1}_{t+1|t} = \mathbb{1}^{\top}_N \frac{1}{N} \bar{P}^{-1}_{t+1|t} \mathbb{1}_N \), we write:
\begin{equation}
    \mu^{\dag^{\top}}_{t} F^{\top} \mathcal P^{-1}_{t+1|t} F \mu^{\dag}_{t} = \mu^{\dag^{\top}}_{t} F^{\top} \sum^N_{i=1} P^{-1}_{i,t+1|t} F \mu^{\dag}_{t}.
\end{equation}
Rewriting \( P_{i,t+1|t} = F P_{i,t} F^{\top} + Q \) as \( P_{i,t+1|t} = F P_{i,t} F^{\top} + I^{\top} Q I \), and using the matrix inversion lemma along with continuity arguments, we obtain:
\begin{equation}
    \mu^{\dag^{\top}}_{t} F^{\top} \mathcal P^{-1}_{t+1|t} F \mu^{\dag}_{t} \leq \frac{1}{N} \mu^{\dag^{\top}}_{t} \sum^N_{i=1} P^{-1}_{i,t} \mu^{\dag}_{t}.
\end{equation}
For \( t \geq 1 \), note that \( P^{-1}_{i,t} = P^{-1}_{i,t|t-1} + \Theta_{i,t} \), and since \( \sum^N_{i=1} \Theta_{i,t} = N \sum^N_{i=1} H^{\top}_i R^{-1}_i H_i \), we conclude:
\begin{equation}
\label{inq_Vt}
    \mu^{\dag^{\top}}_{t} F^{\top} \mathcal P^{-1}_{t+1|t} F \mu^{\dag}_{t} \leq V^{\dag}_t.
\end{equation}
Next, let \( \bar{f} > 0 \) such that \( \lVert F \rVert_F \leq \bar{f} \), and using Lemma 4, there exist constants \( c_o \) and \( c_1 \), which depend on \( \lVert \bar{P} \rVert_F \), \( \lVert \underbar{P} \rVert_F \), and \( \bar{f} \), such that:
\begin{equation}
    \label{inequality4}
    \begin{aligned}
        2 \mu^{\dag^{\top}}_{t+1} \mathcal P^{-1}_{t+1|t} G^{\xi}_{t+1} \mu^{\xi}_t 
        &= 2 (E^{\dag}_{t+1} \mu^{\dag}_{t} + \bar{E}^{\dag}_{t+1} \mu^{\xi}_t)^{\top} \mathcal P^{-1}_{t+1|t} G^{\xi}_{t+1} \mu^{\xi}_t \\
        &\leq c_o \lVert \mu^{\dag}_{t} \rVert \lVert \mu^{\xi}_t \rVert + c_1 \lVert \mu^{\xi}_t \rVert^2.
    \end{aligned}
\end{equation}
Using \eqref{inequality4} and \eqref{inq_Vt} in \eqref{Vk_dagger_equation1}, we get
\begin{equation}
\label{lyap_diff_last}
\begin{aligned}
   V^{\dag}_{t+1} - V^{\dag}_t &\leq -\mu^{\dag^{\top}}_{t+1} {H}^{\top}\bar{R}^{-1}{H} \mu^{\dag}_{t+1}-u^{\top}_{t+1} \mathcal P^{-1}_{t+1|t} u_{t+1} \\
   &+c_o \lVert \mu^{\dag}_{t} \rVert \lVert \mu^{\xi}_t \rVert + c_1 \lVert \mu^{\xi}_t \rVert^2.
   \end{aligned}
\end{equation}
Define the following:
\begin{equation*}
    J_t = \sum^{T-1}_{s=0} \mu^{\dag^{\top}}_{t+1+s} H^{\top} \bar{R}^{-1} H \mu^{\dag}_{t+1+s} + u^{\top}_{t+1+s} \mathcal P^{-1}_{t+1+s|t+s} u_{t+1+s}.
\end{equation*}
Summing \eqref{lyap_diff_last} from \( t \) to \( t+T \), we obtain:
\begin{equation}
    \label{V_gad_difference}
    V^{\dag}_{t+T} - V^{\dag}_{t} \leq -J_t + \frac{c_o}{2\epsilon} \sum^{T-1}_{s=0} \lVert \mu^{\dag}_{t+s} \rVert^2 + \left( \frac{c_o \epsilon}{2} + c_1 \right) \sum^{T-1}_{s=0} \lVert \mu^{\xi}_{t+s} \rVert^2,
\end{equation}
where \( \epsilon > 0 \) is a constant. In \eqref{V_gad_difference}, we use Young's inequality.
Next, let \( \boldsymbol{\mathsf{\mu}}^{\dag}_{t+1} = [\mu^{\dag}_{t+1}; \cdots; \mu^{\dag}_{t+T}] \) and \( \boldsymbol{\mathsf{u}}_{t+1} = [u_{t+1}; \cdots; u_{t+T}] \). Rewriting \eqref{control_form1} compactly:
\begin{equation}
    \label{control_form2}
    \begin{aligned}
        \boldsymbol{\mathsf{\mu}}^{\dag}_{t+1} &= 
        \begin{bmatrix}
            F \\
            F^2 \\
            \vdots \\
            F^T
        \end{bmatrix} \mu^{\dag}_{t} + 
        \begin{bmatrix}
            I \\
            F & I \\
            \vdots & \ddots & \ddots \\
            F^{T-1} & \cdots & F & I
        \end{bmatrix} \boldsymbol{\mathsf{u}}_{t+1} \\
        &= \mathcal F \mu^{\dag}_{t} + \mathcal G \boldsymbol{\mathsf{u}}_{t+1}.
    \end{aligned}
\end{equation}
The function \( J_t \) can then be expressed as:
\begin{equation}
    \begin{aligned}
        J_t &= (\mathcal F \mu^{\dag}_t + \mathcal G \boldsymbol{\mathsf{u}}_{t+1})^{\top} \mathcal H^{\top} \mathcal{R}^{-1} \mathcal H (\mathcal F \mu^{\dag}_t + \mathcal G \boldsymbol{\mathsf{u}}_{t+1}) \\
        &\quad + \boldsymbol{\mathsf{u}}^{\top}_{t+1} \mathcal P^{-1}_{t+1|t} \boldsymbol{\mathsf{u}}_{t+1}.
    \end{aligned}
\end{equation}
This function is convex and quadratic in \( \boldsymbol{\mathsf{u}}_{t+1} \). Minimizing \( J_t \) with respect to \( \boldsymbol{\mathsf{u}}_{t+1} \), we obtain:
\begin{equation}
    J_t^* = \mu^{\dag^{\top}}_{t} \mathcal O^{\top} (\mathcal R + \mathcal H \mathcal G \mathcal P^{-1}_{t+1|t} \mathcal G^{\top} \mathcal H^{\top})^{-1} \mathcal O \mu^{\dag^{\top}}_{t},
\end{equation}
where \( \mathcal O \) is the observability matrix of the pair \( (F, H) \). Since \( \mathcal O \) is full-rank and \( \mathcal R > 0 \), there exists \( c^{\dag} > 0 \) such that:
\begin{equation}
    2 c^{\dag} \lVert \mu^{\dag}_t \rVert^2 \leq J_t^* \leq J_t.
\end{equation}
Using the relationship \( \lVert \mu^{\dag}_{t+1} \rVert^2 \leq \bar{c} (\lVert \mu^{\dag}_t \rVert^2 + \lVert \mu^{\xi}_t \rVert^2) \), with \( \tilde{c} = \max\{1, \bar{c}^{T-1}\} \), the summation term \( \sum_{s=0}^{T-1} \lVert \mu^{\dag}_{t+s} \rVert^2 \) satisfies:
\begin{equation}
    \label{second_term_65}
    \sum_{s=0}^{T-1} \lVert \mu^{\dag}_{t+s} \rVert^2 \leq (T-1) \tilde{c} \{ \lVert \mu^{\dag}_t \rVert^2 + \lVert \mu^{\xi}_t \rVert^2  \}.
\end{equation}
Similarly:
\begin{equation}
    \label{last_term_65}
    \sum_{s=0}^{T-1} \lVert \mu^{\xi}_{t+s} \rVert^2 \leq (T-1) \tilde{c} \{ \lVert \mu^{\xi}_t \rVert^2 \}.
\end{equation}
Using \eqref{second_term_65} and \eqref{last_term_65}, the bound in \eqref{V_gad_difference} becomes:
\begin{equation}
\label{V_gad_difference2}
\begin{aligned}
    V^{\dag}_{t+T} &- V^{\dag}_{t} \leq -2c^{\dag} \lVert \mu^{\dag}_{t} \rVert^2 
    + \frac{c_o }{2\epsilon}(T-1)\tilde{c} \{ \lVert \mu^{\dag}_{t} \rVert^2 
    + \lVert \mu^{\xi}_{t} \rVert^2 \}
    + 
    \\
    &\quad    + (\frac{c_o\epsilon}{2} + c_1)(T-1) \tilde{c} 
    \lVert \mu^{\xi}_{t} \rVert^2 .
\end{aligned}
\end{equation}
Letting $\epsilon = \frac{c_o (T-1)\tilde{c}}{2 c^{\dag}} $ and $c_2= (\frac{c_o\epsilon}{2}+c_1) (T-1) \tilde{c} +1$, we have 
\begin{equation}
    \label{V_gad_difference3}
    V^{\dag}_{t+T} - V^{\dag}_{t} \leq -c^{\dag} \lVert \mu^{\dag}_{t} \rVert^2 +   c_2  \lVert \mu^{\xi}_{t} \rVert^2 .
\end{equation}

Defining $\eta^{\hat{e}}_{t}= \mathbb{E}\{\bar{e}^{\xi_T}_{t,l}\}$ and taking the expectation of \eqref{correction_Alg2_6}, we write
\begin{equation}
    \label{compacr_error_exp}
    \eta^{\hat{e}}_{t+1} = \bar{M}\eta^{\hat{e}}_{t}.
\end{equation}
Next, we define ${V}^{\eta}_t = \eta^{\hat{e}^{\top}}_{t} \mathbb{P}\eta^{\hat{e}}_{t}$. Using the results in Theorem 3, we conclude that there exists a positive definite matrix $\Gamma$ such that
\begin{equation}
     \label{second_Lyap}
    {V}^{\eta}_{t+1}-{V}^{\eta}_t = -\eta^{\hat{e}^{\top}}_{t} \Gamma \eta^{\hat{e}}_{t}.
\end{equation}
Using the transformation \eqref{th5_trans}, the following inequality can be derived from \eqref{second_Lyap}:
\begin{equation}
\label{second_Lyap_final}
     {V}^{\eta}_{t+T}-{V}^{\eta}_t \leq -c_3\lambda_{\mathbb{R}_{>0}}(\Gamma)  \lVert \mu^{\xi}_{t} \rVert^2
\end{equation}  
where $\lambda_{\mathbb{R}_{>0}}(\Gamma) > 0$ is the smallest positive eigenvalue of $\Gamma$, and $c_3>0$. Next, consider the following Lyapunov function candidate:
\begin{equation}
\label{final_lyap}
    V_t = {V}^{\dag}_t + \gamma {V}^{\eta}_t.
\end{equation}
Using \eqref{V_gad_difference3}, \eqref{second_Lyap_final}, and \eqref{final_lyap}, we obtain
\begin{equation*}
\begin{aligned}
      V_{t+T} - V_t &= {V}^{\dag}_{t+T} - {V}^{\dag}_t + \gamma ({V}^{\eta}_{t+T}-{V}^{\eta}_t) \\
      & \leq -c^{\dag} \lVert \mu^{\dag}_{t} \rVert^2 +   c_2 \lVert \mu^{\xi}_{t} \rVert^2 
      + \gamma (-\lambda_{\mathbb{R}_{>0}}(\Gamma) c_3 \lVert \mu^{\xi}_{t} \rVert^2 \\
      & \leq -c^{\dag} \lVert \mu^{\dag}_{t} \rVert^2 -(\gamma \lambda_{\mathbb{R}_{>0}}(\Gamma)c_3 -c_2) \lVert \mu^{\xi}_{t} \rVert^2 .
\end{aligned}
\end{equation*}
We choose $\gamma > 0$ and $\Gamma$ such that $\gamma \lambda_{\mathbb{R}_{>0}}(\Gamma)c_3 - c_2 > 0$, and the result follows.
\end{proof}

\section{Simulation Results}
In this section, we validate the theoretical results of the proposed distributed filtering algorithm by simulating a network of 100 sensor nodes tracking the trajectory of a car moving with constant velocity, as described in \cite[pp.~99--101]{sarkka2023bayesian}. The dynamical system is collectively observable, with the state transition matrix given by  
\[
F = \begin{bmatrix}
    I_{2} & \delta t \, I_{2} \\
    O & I_{2}
\end{bmatrix},
\]
and the state vector is \( x = [x_1, x_2, x_3, x_4]^\top \). Here, \((x_1, x_2)\) represent the position of the car in the \(x-y\) plane, and \((x_3, x_4)\) represent the corresponding velocities. The measurement at the \(i\)-th sensor node observes either \(x_1\) or \(x_2\) randomly at each time step. The process noise covariance and measurement noise covariance are selected as per \cite[pp.~99--101]{sarkka2023bayesian}.

For the simulation, the following parameters are used: time step \(\delta t = 0.1\) seconds, and the design parameters are  \(\alpha_\lambda = 0.10\), \(\alpha_\nu = 0.04\), \(\mu = 0.001\). Each estimator at the \(i\)-th node is initialized with randomly selected \(\hat{x}_{i, 0\mid 0}\) and \(P_{i, 0\mid 0}\). The simulation spans a total duration of 10 seconds. To estimate the state, the proposed distributed filtering algorithm is implemented with 20 sub-iterations per time step, i.e., \(L = 20\). 

The filtering performance is evaluated in terms of the root mean squared error (RMSE) of the position and velocity estimates. In Fig.~\ref{fig:car_tracking_iteration}, we plot the RMSE of the position and velocity estimates for all estimators, i.e., the DKF at all sensor nodes, averaged over 50 Monte Carlo (MC) runs. Fig.~\ref{fig:car_tracking_iteration} demonstrates that the position and velocity RMSE values of the distributed estimators converge and achieve similar performance across all nodes, thereby validating the theoretical results presented in this paper.

\begin{figure}[h!]
		\centering
		\begin{subfigure}[b]{0.24\textwidth}
			\centering
			\includegraphics[width=4.8cm,height=3.5cm]{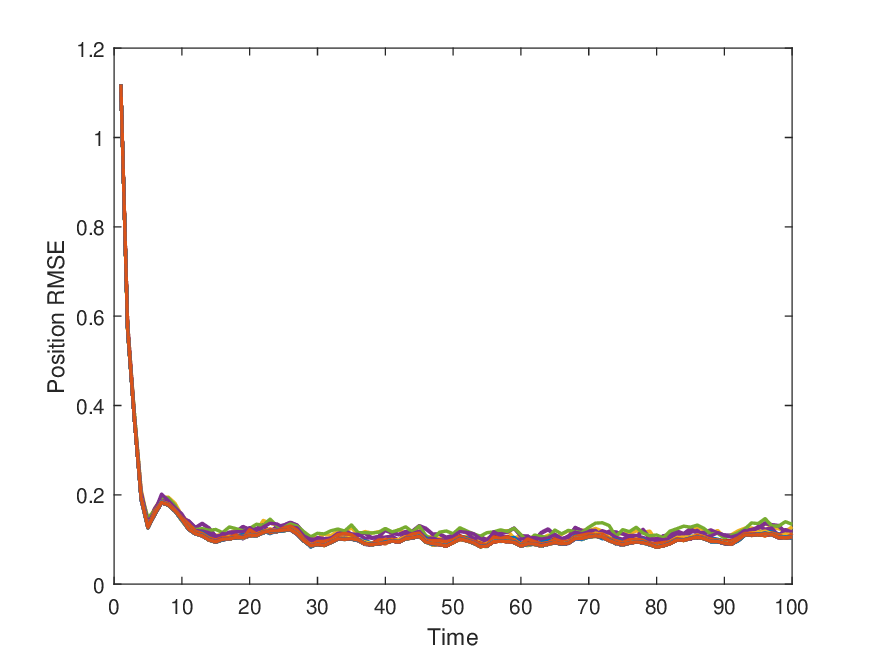}
			\caption{}
			\label{fig_rms_pos}
		\end{subfigure}
		\hfill
		\begin{subfigure}[b]{0.24\textwidth}
			\centering
			\includegraphics[width=4.8cm,height=3.5cm]{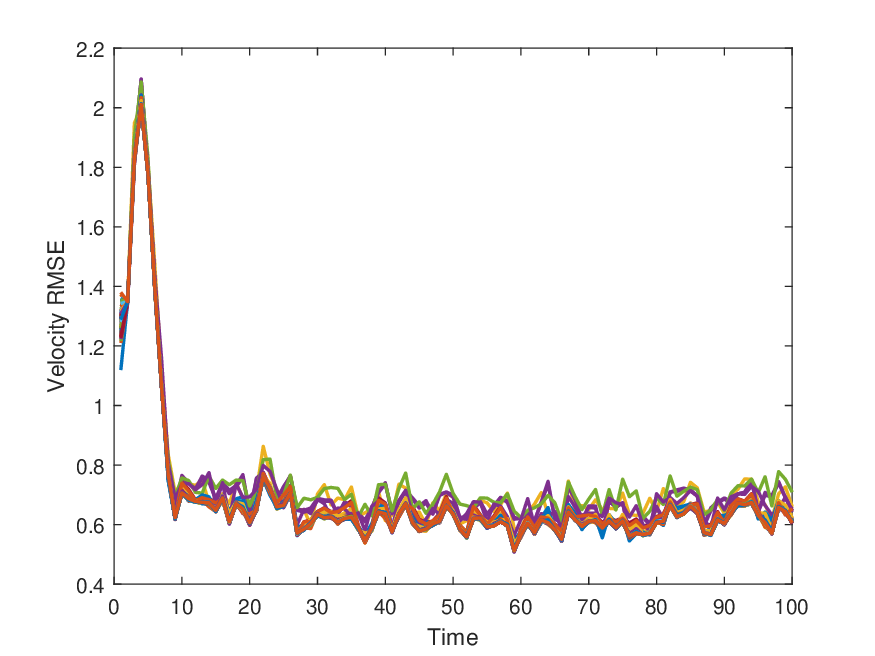}
			\caption{}
			\label{fig_rms_vel}
		\end{subfigure}		
		\caption{Position and velocity RMSE of the proposed DKF with different sub-iterations obtained from 50 MC runs.}
		\label{fig:car_tracking_iteration}
\end{figure}
\section{Conclusion}

We developed a consensus-based ADMM algorithm to derive the correction step for the distributed filtering. A new augmented Lagrangian formulation for the DKF problem was proposed, enabling a fully distributed implementation of the correction step for the posterior state and covariance estimates. The proposed consensus-based ADMM avoids exchanging dual variables, significantly reducing communication between nodes. The algorithm yields much tighter upper bounds, specifically $\alpha_{\nu} < \frac{2}{3 \lambda_{\text{max}}(\mathcal L)}$ and $\alpha_{\lambda}+2\mu <\frac{2}{ \lambda_{\text{max}}(\mathcal L)}$. 
Larger design parameter values improve the convergence rate, allowing consensus to be achieved with fewer sub-iterations and greater accuracy. Additionally, we observed that solving the distributed optimization problem for each node's covariance matrix is a static optimization task, eliminating the need for sub-iterations.
We demonstrated the stability of the consensus-based algorithm by modeling the update system as a discrete-time dynamical system. Furthermore, we showed all local estimators are unbiased.
\bibliographystyle{IEEEtran}
\bibliography{refs}


\end{document}